\newtheorem{thm}{Theorem}[section]
\newtheorem{cor}[thm]{Corollary}
\newtheorem{prop}[thm]{Proposition}
\theoremstyle{definition}
\newtheorem{defn}[thm]{Definition}
\theoremstyle{remark}
\def\beq{\begin{eqnarray}}
\def\eeq{\end{eqnarray}}
\def\bsp{\begin{split}}
\def\esp{\end{split}}
\def\d{\mathrm{d}}
\def\diag{\mathrm{diag}}
\def\i{\mathrm{i}}
\newcommand{\mbold}[1]{\mbox{\boldmath{\ensuremath{#1}}}}
\begin{document}

\title{\Large\textbf{Spacetimes characterized by their scalar curvature invariants}}
\author{{\large\textbf{Alan Coley$^{1}$, Sigbj\o rn Hervik$^{2}$, Nicos Pelavas$^{1}$} }
%EndAName
%\address{
 \vspace{0.3cm} \\
$^{1}$Department of Mathematics and Statistics,\\
Dalhousie University,
Halifax, Nova Scotia,\\
Canada B3H 3J5
\vspace{0.2cm}\\
$^{2}$Faculty of Science and Technology,\\
 University of Stavanger,\\  N-4036 Stavanger, Norway
%\email{
\vspace{0.3cm} \\
\texttt{aac, pelavas@mathstat.dal.ca, sigbjorn.hervik@uis.no} }
\date{\today}
\maketitle
\pagestyle{fancy}
\fancyhead{} % clear all header fields
\fancyhead[EC]{A. Coley, S. Hervik and N. Pelavas}
\fancyhead[EL,OR]{\thepage}
\fancyhead[OC]{Spacetimes characterized by their scalar curvature invariants}
\fancyfoot{} % clear all footer fields

\begin{abstract}

In this paper we determine the class of four-dimensional
Lorentzian manifolds that can be completely characterized by the
scalar polynomial curvature invariants constructed from the Riemann tensor
and its covariant derivatives. We introduce the notion of an
$\mathcal{I}$-non-degenerate spacetime metric, which implies that
the spacetime metric is locally determined by its curvature
invariants. By determining an appropriate set of projection
operators from the Riemann tensor and its covariant derivatives,
we are able to prove a number of results (both in the
algebraically general and in algebraically special cases) of when
a spacetime metric is $\mathcal{I}$-non-degenerate. This enables
us to prove our main theorem that a spacetime metric is either
$\mathcal{I}$-non-degenerate or a Kundt metric. Therefore, a
metric that is not characterized by its curvature invariants must
be of Kundt form. We then discuss the inverse question of what
properties of the underlying spacetime can be determined from a
given a set of scalar polynomial invariants, and some partial
results are presented. We also discuss the notions of
\emph{strong} and \emph{weak} non-degeneracy.

\end{abstract}
%%%%%%%%%%%%%%%%%%%%%%%%%%%%%%%%%%%%%%%%%%%%%

\section{Introduction}
In matters related to relativity and gravitational physics we are often interested in comparing various spacetime metrics. Often identical metrics (which, of course, would give identical physics) are given in different coordinates and will therefore be disguising their true equivalence. It is therefore of import to have an invariant way to distingush spacetime metrics. The perhaps easiest way of distinguishing metrics is to calculate (some of) their scalar polynomial curvature invariants due to the fact that inequivalent invariants implies inequivalent metrics. However, if their scalar polynomial invariants are the same, what conclusion can we draw about the (in)equivalence of the metrics? For example, if all such invariants are zero, can we say that the metric is flat? The answer to this question is known to be no, because all so-called VSI metrics have vanishing scalar invariants. Here, we will address the more general question: if two spacetimes have identical scalar polynomical curvature invariants, what can we say about these spacetimes? In particular, when do the invariants characterise the spacetime metric? 

For a spacetime $(\mathcal{M},g)$ with a set of scalar polynomial curvature invariants, there
are two conceivable ways in which the metric $g$ can be altered
such that the invariants remain the same. First, the metric can be
continuously deformed in such a way that the invariants remain
unchanged. This is what happens for the Kundt metrics for which we
have free functions which do not affect the curvature invariants.
Alternatively,  a discrete transformation of the metric can leave
the invariants the same. A simple example of when a discrete
transformation can give another metric with the same set of
invariants is the pair of metrics: \beq\label{weakly1}
\d s^2_1&=&\frac{1}{z^2}\left(\d x^2+\d y^2+\d z^2\right)-\d \tau^2, \\
\d s^2_2&=&\frac{1}{z^2}\left(-\d x^2+\d y^2+\d z^2\right)+\d \tau^2. \label{weakly2}
\eeq
One can straight-forwardly check that these metrics have identical invariants but are not diffeomorphic (over the reals). These discrete transformations are more difficult to deal with but the issue will be taken up in a later section.

Therefore, first we will  consider the continuous metric
deformations defined as follows.
\begin{defn}
\noindent For a spacetime $(\mathcal{M},g)$, a (one-parameter) \emph{metric deformation}, $\hat{g}_\tau$, $\tau\in [0,\epsilon)$, is a family of smooth metrics on $\mathcal{M}$ such that
\begin{enumerate}
\item{} $\hat{g}_\tau$ is continuous in $\tau$,
\item{} $\hat{g}_0=g$; and
\item{} $\hat{g}_\tau$ for $\tau>0$, is not diffeomorphic to $g$.
\end{enumerate}
\end{defn}
\noindent For any given spacetime $(\mathcal{M},g)$ we define the set of all scalar polynomial curvature invariants
\begin{equation}
\mathcal{I}\equiv\{R,R_{\mu\nu}R^{\mu\nu},C_{\mu\nu\alpha\beta}C^{\mu\nu\alpha\beta}, R_{\mu\nu\alpha\beta;\gamma}R^{\mu\nu\alpha\beta;\gamma}, R_{\mu\nu\alpha\beta;\gamma\delta}R^{\mu\nu\alpha\beta;\gamma\delta},\dots\} \,. \nonumber
\end{equation}
\noindent Therefore, we can consider the set of invariants as a function of the metric
and its derivatives. However, we are interested in to what extent,
or under what circumstances, this function has an inverse.
\begin{defn}
Consider a spacetime $(\mathcal{M},g)$ with a set of invariants
$\mathcal{I}$. Then, if there does not exist a metric deformation
of $g$ having the same set of invariants as $g$, then we will call
the set of invariants \emph{non-degenerate}. Furthermore, the
spacetime metric $g$, will be called
\emph{$\mathcal{I}$-non-degenerate}.
\end{defn}

This implies that for a metric which is
$\mathcal{I}$-non-degenerate the invariants characterize the
spacetime uniquely, at least locally, in the space of (Lorentzian)
metrics. This means that these metrics are characterized by their
curvature invariants and therefore we can distinguish such metrics
using their invariants. Since scalar curvature invariants are
manifestly diffeomorphism-invariant we can thereby avoid the
difficult issue whether a diffeomorphism exists connecting two
spacetimes.

\section{Main Theorems}
Let us first state our main theorems which will be proven in the
later sections. The theorems apply to four-dimensional (4D)
Lorentzian manifolds. Such spacetimes are characterized
algebraically by their Petrov \cite{Petrov,kramer} and Segre \cite
{HallBook,kramer} types or, equivalently, in terms of their Ricci,
Weyl (and Riemann) types \cite {coley,Milson,class}. The notation,
which essentially follows that of the cited references, is briefly
summarized in Appendix A. The proofs of these theorems, which are
investigated on a case by case basis in terms of the algebraic
type of the curvature tensors, are long and tedious and have
therefore been placed in later sections. Once all of the various
cases have been explored the theorems follow.

Furthermore, let us remark on the technical assumptions made in this paper. The following theorems hold on neighborhoods where the Riemann, Weyl and Segre types do not change. In the algebraically special cases we also need to assume that the algebraic type of the higher-derivative curvature tensors also do not change, up to the appropriate order. Most crucial is the definition of the curvature operators (see later) and in order for these to be well defined, the algebraic properties of the curvature tensors need to remain the same over a neighborhood.\footnote{Alternatively, we can assume that the spacetime is real analytic.} Henceforth, we will therefore assume that we consider an open neighborhood where the algebraic properties of the curvature tensors do not change, up to the appropriate order ($\leq 4$).

The first theorem deals with the algebraic classification of the curvature tensors, and the relation to the $\mathcal{I}$-non-degenerate metrics.

\begin{thm}[Algebraically general]\label{thm:alggen}
If a spacetime metric is of Ricci type $I$, Weyl type $I$, or Riemann type $I/G$,  the metric is $\mathcal{I}$-non-degenerate.
\end{thm}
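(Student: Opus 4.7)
The plan is to construct, from the Riemann tensor and possibly its covariant derivatives, a canonical frame whose determination is governed entirely by algebraic conditions that can be read off from scalar polynomial curvature invariants. Once such a frame is available, every component of the Riemann tensor in that frame is itself a scalar invariant, so any continuous metric deformation preserving $\mathcal{I}$ would have to preserve every frame component of the curvature, and the classical Cartan equivalence argument then forces the deformation to be trivial.

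Concretely, I would work with the curvature operators: the Ricci tensor viewed as a linear map $R^{\mu}{}_{\nu}:T_pM\to T_pM$, the Weyl tensor $C^{\mu\nu}{}_{\alpha\beta}$ viewed as an operator on the six-dimensional bivector space $\Lambda^2 T_pM$, and the full Riemann operator on bivectors. In each algebraically general case the corresponding operator has a spectrum whose structure singles out a canonical frame. For Ricci type $I$ the Ricci operator is diagonalisable over $\mathbb{R}$ with a timelike eigenvector, producing a canonical orthonormal tetrad from its eigenvectors. For Weyl type $I$ the Weyl operator on bivectors picks out four distinct principal null directions, any pair of which fixes a null frame. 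For Riemann type $I/G$ no further algebraic reduction is possible and the Riemann operator itself plays the role of Ricci on bivectors, again giving a canonical eigenframe. Crucially, the eigenvalues of each operator are coefficients of its characteristic polynomial, hence expressible as traces of powers, so they are themselves scalar polynomial invariants; the eigenspaces are therefore constructed from invariants rather than from a choice of coordinates.

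Case by case, I would then check that in the algebraically general stratum the eigenstructure leaves no residual continuous Lorentz isotropy. In the generic subcase all eigenvalues are distinct and the frame is fixed up to discrete symmetries; in any subcase where some eigenvalues coincide but the algebraic type remains $I$ (respectively $I/G$), I would use the frame components of $\nabla R$, $\nabla\nabla R$, \ldots\ in a frame adapted to the lower-order eigenstructure as additional invariants to kill any remaining continuous stabiliser. This is exactly the philosophy of the Cartan--Karlhede procedure, and the main technical content consists of verifying that algebraically general curvature is indeed strong enough to reduce the isotropy group of the curvature data to a discrete subgroup, uniformly on the open neighborhood of constant algebraic type where the standing assumption of the paper is in force.

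The main obstacle will be precisely this last point: proving, for each of Ricci type $I$, Weyl type $I$, and Riemann type $I/G$, that one cannot simultaneously have an algebraically general spectrum and a one-parameter Lorentz symmetry of the full curvature data. Once trivial continuous stabiliser is established, the conclusion is immediate: a deformation $\hat{g}_\tau$ with $\mathcal{I}(\hat{g}_\tau)=\mathcal{I}(g)$ would leave every frame component of every covariant derivative of the Riemann tensor constant in $\tau$, so the Cartan--Karlhede data agree along the family, and the standard equivalence theorem implies $\hat{g}_\tau$ is locally isometric to $g$ for all $\tau$. This contradicts clause (3) of the metric-deformation definition, whence $g$ is $\mathcal{I}$-non-degenerate.
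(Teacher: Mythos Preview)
Your overall strategy---build curvature operators, read off eigenvalues as invariants, use the eigenstructure to pin down a frame, then invoke Cartan--Karlhede---is the same architecture the paper uses. But there is a genuine gap in how you handle residual isotropy, and a couple of technical mischaracterizations.

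The central problem is your plan to ``kill any remaining continuous stabiliser'' by passing to $\nabla R$, $\nabla^{(2)}R$, etc. This does not always succeed. Take Segre type $\{1,(111)\}$ (which is Ricci type $I$): the Ricci operator gives you a timelike eigendirection and an undifferentiated spacelike $3$-plane, leaving an $SO(3)$ residual group. For an FRW spacetime, that $SO(3)$ isotropy persists to \emph{all} orders of differentiation, so no amount of $\nabla^{(k)}R$ will ever reduce the stabiliser to a discrete group, and your argument stalls. The same happens for $\{1,1(11)\}$ with a genuine $SO(2)$ isotropy. The paper's way around this is its Theorem~3.1: once you have projectors of type $\{1,111\}$, $\{1,1(11)\}$ or $\{1,(111)\}$, the residual group is \emph{compact}, and for compact groups polynomial invariants separate orbits (the paper cites Pr\"ufer--Tricerri--Vanhecke for this). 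So you do not need to kill the stabiliser at all; you only need it to be compact. This is the key idea you are missing, and without it the claim ``every frame component $\ldots$ is itself a scalar invariant'' is simply false when eigenvalues coincide.

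Two smaller corrections. First, Ricci type $I$ is not always diagonalisable over $\mathbb{R}$ with a timelike eigenvector: Segre types $\{z\bar z 11\}$ and $\{z\bar z(11)\}$ have a complex-conjugate eigenvalue pair, and the paper handles these via a complex frame. Second, the paper's ``Riemann type $I/G$'' case is not treated by the Riemann bivector operator; it is the situation where Ricci and Weyl are each algebraically special but their canonical null frames are \emph{not} aligned, and the argument proceeds by combining Weyl-derived projectors (type $D$, $II$, $III$, $N$ separately) with Ricci components to manufacture the timelike projector needed for Theorem~3.1.
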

This theorem indicates that the general metric is
$\mathcal{I}$-non-degenerate and thus the metric is determined by
its curvature invariants (at least locally, in the sense explained
above). In the above, by Riemann type $I/G$, we are referring to the existence of a frame in which components of boost weight +2 vanish for Riemann type $I$, and in type $G$ there does not exist a frame in which components with boost weight +2 or -2 vanish, in this case the Weyl and Ricci canonical frames are not aligned.
For the algebraically special spacetimes, we need to consider
covariant derivatives of the Riemann tensor.
\begin{thm}[Algebraically special]\label{thm:algspecial}
If the spacetime metric is algebraically special, but $\nabla R$, $\nabla^{(2)} R$, $\nabla^{(3)} R$, or $\nabla^{(4)} R$ is of type $I$ or more general, the metric is $\mathcal{I}$-non-degenerate.
\end{thm}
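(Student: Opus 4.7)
The plan is to reduce the theorem to the argument behind Theorem \ref{thm:alggen} by using the higher covariant derivatives of the Riemann tensor in place of the Riemann tensor itself. In the proof of Theorem \ref{thm:alggen}, algebraic genericity of $R$ allows one to extract, from the Weyl and Ricci operators, a null frame that is canonical up to discrete transformations, and to express the projectors onto the boost-weight eigenspaces as polynomials in those operators; consequently every frame component of $R$ is a polynomial scalar curvature invariant of $(\mathcal{M},g)$, and no metric deformation can leave $\mathcal{I}$ unchanged. Here the hypothesis gives this algebraic genericity not at the level of $R$ but at the level of some $\nabla^{(j)} R$ with $1 \le j \le 4$, and the strategy is to run the same machine on that tensor.

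Concretely, I would first attach to $\nabla^{(j)} R$ a family of \emph{curvature operators} by tracing appropriate pairs of indices: symmetric contractions produce (0,2) and (2,2) tensors which act as linear operators on $T_pM$ and on $\Lambda^2 T_p M$. Contraction is compatible with the boost-weight decomposition, so if $\nabla^{(j)} R$ is of type $I$ (or $G$), then generically each such operator inherits the type-$I$ (or $G$) property, unless the specific contraction annihilates the relevant boost-weight components. The task is to display enough distinct contractions to guarantee that at least one of the induced operators is algebraically general in the sense of Theorem \ref{thm:alggen}. Its eigenstructure then selects a preferred orthonormal frame, with projectors onto the eigenspaces expressible as polynomials in the operators, and therefore constructible from elements of $\mathcal{I}$.

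Second, once such a canonical frame has been pinned down from $\nabla^{(j)} R$, every frame component of $\nabla^{(j)}R$, and then (by successive differentiation and the commutation identities) of $R$ itself, is a scalar polynomial curvature invariant. Mimicking the conclusion of Theorem \ref{thm:alggen}, any one-parameter metric deformation that preserves $\mathcal{I}$ would have to preserve this frame and all curvature components in it, hence preserve $g$, establishing $\mathcal{I}$-non-degeneracy. The statement is then proved by exhausting the possibilities $j=1,2,3,4$ together with the algebraic type of $R$ (II, D, III, N, O), the conclusion being identical in each case.

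The main obstacle is the combinatorial bookkeeping of these cases and, more substantively, the verification that in each case one can actually find contractions of $\nabla^{(j)} R$ whose resulting operator retains the type-$I$ alignment. For the more degenerate Riemann types (III, N, O) the leading boost-weight components are very constrained, and naive contractions risk killing precisely the pieces needed to single out the preferred null direction; so the contraction scheme has to be chosen with care. This delicacy is presumably what forces the hypothesis to allow derivatives up to order four, rather than being achievable at order one or two.
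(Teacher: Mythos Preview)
Your high-level strategy---manufacture curvature operators from $\nabla^{(j)}R$ and extract a canonical frame via projectors---matches the paper's. But the step you flag as ``bookkeeping'' is in fact the entire content of the proof, and the way you propose to do it does not work.

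The gap is in your claim that contractions of a type~$I$ tensor $\nabla^{(j)}R$ ``generically inherit the type-$I$ property.'' In the algebraically special situation this is exactly what fails. Take Weyl type~$N$: the paper forms $T^{\mu}{}_{\nu}=\nabla_\gamma C^{\alpha\beta\gamma\mu}\nabla^\delta C_{\alpha\beta\delta\nu}$ and finds that it has \emph{no} positive boost weight components; its boost weight~$0$ piece is proportional to $|\kappa|^2|\Psi_4|^2$. So even when $\nabla C$ carries positive boost weight, the natural quadratic contraction is only Ricci type~$II$/$D$, yielding at best projectors of type $\{(1,1)11\}$---not the $\{1,111\}$ or $\{1,1(11)\}$ needed to invoke Theorem~\ref{thm:lochom}. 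No amount of ``displaying enough distinct contractions'' fixes this: with $R$ itself lacking positive boost weight, symmetric quadratic contractions of $\nabla^{(j)}R$ systematically kill the top boost weight.

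What the paper actually does is a two-stage bootstrap you have not anticipated. First, even an algebraically special $R$ furnishes \emph{partial} projectors (e.g.\ $\{(1,1)(11)\}$ for types~$D$/$II$), and these are used to \emph{project} $\nabla C$ onto subspaces before building operators---see the construction of $T_{iab}=C^{j}{}_{ij(a;b)}$ and $\widehat{S}_{abcd}$ in the type~$D$ analysis. The resulting operators are analysed explicitly in terms of the NP spin coefficients $\kappa,\sigma,\rho,\ldots$, and one shows case by case that nonvanishing of the positive boost weight part of $\nabla C$ forces enough of these coefficients to be nonzero to complete the projector set. Second, for types~$III$/$N$ one must iterate: the first derivative yields only a type-$D$ Pleba\'nski-like tensor, and one then differentiates \emph{that} (effectively going to second or third order) to finish. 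This iteration, not a single clever contraction, is why derivatives up to order four appear.

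A minor point: your second paragraph's plan to recover the components of $R$ from those of $\nabla^{(j)}R$ ``by commutation identities'' is unnecessary. Once projectors of type $\{1,111\}$, $\{1,1(11)\}$ or $\{1,(111)\}$ exist, Theorem~\ref{thm:lochom} gives $\mathcal{I}$-non-degeneracy outright: the projectors act on \emph{every} curvature tensor, $R$ included.
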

In terms of the boost weight decomposition, an algebraically
special metric has a Riemann tensor with zero positive boost
weight components. In general, type $I$ refers to the vanishing of
boost weight components +2 and higher (but not boost weight +1
components). For example, we often use the notation $(\nabla
R)_{b}=0$, $b\geq 2$ to denote a $\nabla R$ of type $I$ (but
$(\nabla R)_{1} \neq 0$). The above theorem indicates that if by
taking covariant derivatives of the Riemann tensor you acquire
positive boost weight components, then the metric is
$\mathcal{I}$-non-degenerate. The remaining metrics which \emph{do
not} acquire a positive boost weight component when taking
covariant derivatives, have a very special structure of their
curvature tensors. Indeed, such metrics must be very special
metrics:
\begin{thm}\label{thm:main}
Consider a spacetime metric. Then either,
\begin{enumerate}
\item{}the metric is $\mathcal{I}$-non-degenerate; or,
\item{}the metric is contained in the Kundt class.
\end{enumerate}
\end{thm}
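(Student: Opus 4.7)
The strategy is to prove the contrapositive of Theorem \ref{thm:main}: assuming the metric is not $\mathcal{I}$-non-degenerate, I would deduce that it must be of Kundt form. Theorem \ref{thm:alggen} immediately rules out Ricci type $I$, Weyl type $I$ and Riemann type $I/G$, so the Ricci and Weyl tensors must both be algebraically special and aligned, with a common null direction $\ell$ in which the Riemann tensor has vanishing positive boost weight components. Theorem \ref{thm:algspecial} further rules out the possibility that any of $\nabla R$, $\nabla^{(2)} R$, $\nabla^{(3)} R$ or $\nabla^{(4)} R$ is of type $I$ or more general. After restricting to a neighborhood on which all algebraic types are constant (as already assumed in the paper), one may fix a single frame $(\ell, n, m, \bar m)$ in which $R$ and each $\nabla^{(k)} R$ for $1 \le k \le 4$ simultaneously have no positive boost weight components.

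The remaining step is to show that this simultaneous alignment forces $\ell$ to be a Kundt null congruence, i.e.\ the Newman--Penrose optical scalars $\kappa$, $\rho$ and $\sigma$ associated with $\ell$ must all vanish. The tool I would use is the expansion of $\nabla R$ in the aligned frame: each positive boost weight component of $\nabla R$ decomposes schematically as a directional derivative of a lower boost weight component of $R$ plus an algebraic combination of connection coefficients times components of $R$. The connection coefficients that appear multiplying $R_{[\text{bw}\,0]}$ and $R_{[\text{bw}\,-1]}$ are precisely the optical scalars of $\ell$. Imposing that every positive boost weight component of $\nabla R$ vanishes therefore yields algebraic constraints of the schematic form $\kappa \cdot R_{*} = 0$, $\rho \cdot R_{*} = 0$ and $\sigma \cdot R_{*} = 0$, where $R_{*}$ denotes a specific non-positive boost weight component of $R$. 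Whenever the curvature factor is nonzero, the corresponding optical scalar is forced to zero, which is exactly the Kundt condition.

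The main obstacle is the degenerate subcases in which the curvature factor multiplying a given optical scalar vanishes identically; these are the algebraically most special configurations, for example Weyl type $N$ combined with a highly special Ricci type or vacuum type $III$. In those subcases the constraints from $\nabla R$ alone are insufficient, and one must iterate the argument using $\nabla^{(2)} R$, $\nabla^{(3)} R$ and finally $\nabla^{(4)} R$: at each stage new algebraic combinations of optical scalars and curvature components (now including derivatives of the curvature components already constrained) appear and produce further equations. I expect the bulk of the work to be a case-by-case analysis organized by the Weyl/Segre type pair and by the first order $k \le 4$ at which the relevant optical scalar is detectable, the role of Theorem \ref{thm:algspecial} being precisely to guarantee that four covariant derivatives always suffice in four dimensions. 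Assembling these cases yields $\kappa = \rho = \sigma = 0$ in every non-$\mathcal{I}$-non-degenerate case, placing the metric in the Kundt class and completing the dichotomy.
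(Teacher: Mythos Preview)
Your overall shape is right---a case analysis on the algebraic type, with the spin coefficients $\kappa,\rho,\sigma$ as the decisive quantities---and this is exactly what the paper does. But there are two points where your logic diverges from the paper's in a way that leaves a gap.

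First, you invoke Theorems \ref{thm:alggen} and \ref{thm:algspecial} as inputs and then treat ``$\nabla^{(k)}R$ is type II for each $k\le 4$ in a \emph{single} aligned frame'' as the remaining hypothesis. The contrapositive of Theorem \ref{thm:algspecial} only tells you that for each $k$ there is \emph{some} frame in which $\nabla^{(k)}R$ has no positive boost weight components; it does not tell you that this frame coincides with the Riemann canonical frame $\ell$, nor that the frames for different $k$ agree. Your sentence ``one may fix a single frame\ldots'' is therefore an unjustified step. In the paper, Theorems \ref{thm:algspecial} and \ref{thm:main} are not proved sequentially but \emph{simultaneously} by the same case analysis in Section~5; treating \ref{thm:algspecial} as a black-box input to \ref{thm:main} inverts the actual logical structure.

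Second, and relatedly, the paper's mechanism is not ``impose $(\nabla R)_{+}=0$ and read off $\kappa R_*=0$''. It is the contrapositive, carried out constructively: working in the Weyl (or Ricci) canonical frame, the paper builds explicit curvature operators out of $\nabla C$, $\nabla^{(2)}C$, etc.---for example $T_{iab}=C^{j}_{\ ij(a;b)}$ in the type $D$ case, or $W^{\mu\nu}{}_{\epsilon\eta}=C^{\alpha\beta\gamma\delta;(\mu\nu)}C_{\alpha\beta\gamma\delta;(\epsilon\eta)}$ in the type $N$ case---and shows that whenever one of $\kappa,\rho,\sigma$ is nonzero these operators produce projectors of type $\{1,111\}$, $\{1,1(11)\}$ or $\{(1,1)(11)\}$, so that Theorem \ref{thm:lochom} (or the reduction to a type $D$ ``Pleba\'nski'' tensor) applies. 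This constructive route is what makes the argument work: it never needs to know that $\nabla R$ is type II in the canonical frame, only that if the relevant components are nonzero in that frame then one can manufacture the projectors. Your proposed direction would need exactly this construction to close the alignment gap, at which point you are doing the paper's proof. The case-by-case analysis you anticipate in your final paragraph is the entire content of Section~5, and it proves Theorems \ref{thm:algspecial} and \ref{thm:main} in one stroke rather than the latter from the former.
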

This is a striking result because it tells us that metrics
not determined by their curvature invariants must be of
Kundt form. These Kundt metrics therefore correspond to degenerate
metrics in the sense that many such spacetimes can have identical
invariants.  The Kundt class is defined by those metrics admitting a
null vector $\ell$ that is geodesic, expansion-free, shear-free and twist-free (corresponding to the vanishing of the spin-coefficients $\kappa$, $\sigma$ and $\rho$; see also Appendix \ref{notation})
\begin{equation}
\begin{array}{cccc}
\ell^{\beta}\ell_{\alpha;\beta}=0 & \ell_{\alpha}^{\ ;\alpha}=0 & \ell^{(\alpha;\beta)}\ell_{\alpha;\beta}=0 & \ell^{[\alpha;\beta]}\ell_{\alpha;\beta}=0 \, .
\end{array}
\end{equation}
Any metric in the Kundt class can be written in the following canonical form \cite{CSI,coley}:
\beq
\d s^2=2\d u\left[\d v +H(v,u,x^k)\d
u+W_{i}(v,u,x^k)\d x^i\right]+g_{ij}(u,x^k)\d x^i\d x^j.
\label{HKundt}\eeq

For spacetimes with constant curvature invariants (CSI)
Theorem \ref{thm:main} has an important consequence. For CSI metrics,
$\mathcal{I}$-non-degenerate implies that the spacetime is
curvature homogeneous to all orders; hence, an important corollary
is a proof of the CSI-Kundt conjecture \cite{CSI}:

\begin{cor}[CSI spacetimes]
Consider a 4-dimensional spacetime having all constant curvature invariants (CSI). Then either,
\begin{enumerate}
\item{} the spacetime is locally homogeneous; or,
\item{} a subclass of the Kundt spacetimes.
\end{enumerate}
\end{cor}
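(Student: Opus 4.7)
The plan is to deduce this corollary directly from Theorem~\ref{thm:main} together with a standard curvature homogeneity argument. Given a 4D CSI spacetime, Theorem~\ref{thm:main} dichotomizes: either the metric is $\mathcal{I}$-non-degenerate, or it lies in the Kundt class. The latter alternative already yields the second conclusion of the corollary (the intersection $\text{CSI}\cap\text{Kundt}$ is exactly the ``subclass of Kundt spacetimes'' referred to), so the substantive work is in the first alternative: we must show that an $\mathcal{I}$-non-degenerate CSI spacetime is locally homogeneous.

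For that first case, the first step is to upgrade CSI plus $\mathcal{I}$-non-degeneracy to curvature homogeneity of all orders. The $\mathcal{I}$-non-degeneracy property was established via the construction of projection/curvature operators from $R$ and its covariant derivatives, which under the standing hypothesis of fixed algebraic types pick out a preferred frame at every point. In this preferred frame, every component of $R$ and of $\nabla^{(k)} R$ can be written as a polynomial expression in the scalar polynomial invariants (that is what $\mathcal{I}$-non-degeneracy ultimately delivers). Under CSI those invariants are constant on $\mathcal{M}$, so the components of $R,\nabla R,\nabla^{(2)}R,\ldots$ in the preferred frame are the same constants at every point. Consequently, for any two points $p,q\in\mathcal{M}$ the linear map identifying the preferred frames at $p$ and $q$ is a linear isometry of tangent spaces carrying $(\nabla^{(k)} R)_p$ to $(\nabla^{(k)} R)_q$ for all $k$; this is the statement that $(\mathcal{M},g)$ is curvature homogeneous to all orders.

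The second step is to invoke a Singer-type theorem: in finite dimension, curvature homogeneity to a sufficiently high (but finite) order implies local homogeneity, and certainly curvature homogeneity to all orders does. In the real analytic setting this is immediate by analytic continuation of the Killing equations once the appropriate isotropy data stabilize; in the smooth setting the Singer/Podesta--Spiro machinery provides the same conclusion. Applying this to the spacetime produced in the previous step gives local homogeneity, completing the $\mathcal{I}$-non-degenerate branch.

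The main obstacle, and where the argument must be handled carefully, is the first step: making precise that $\mathcal{I}$-non-degeneracy together with constant scalar invariants forces the frame components of $\nabla^{(k)}R$ themselves (not merely certain scalar contractions of them) to be constant to all orders $k$. This is precisely where one relies on the projection-operator construction underlying Theorems~\ref{thm:alggen} and~\ref{thm:algspecial}, which gives a canonical frame at each point whose components are rational functions of the invariants; once that is in place, CSI propagates from the invariants to the full tensor components, and the Singer-type result finishes the proof.
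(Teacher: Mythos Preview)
Your argument is correct and follows exactly the route the paper sketches in the paragraph preceding the corollary: Theorem~\ref{thm:main} gives the dichotomy, and in the $\mathcal{I}$-non-degenerate branch CSI forces curvature homogeneity to all orders (via the projector/frame construction you describe, as in the proof of Theorem~\ref{thm:lochom}), whence local homogeneity by a Singer-type result. You have supplied considerably more detail than the paper itself, but the approach is identical.
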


These theorems imply that the Kundt spacetimes play a pivotal role in the
question of which metrics are $\mathcal{I}$-non-degenerate.  Indeed, the Kundt
metrics are the \emph{only metrics not determined by their curvature invariants}
(in the sense explained above).

%%%%%%%%%%%%%%%%%%%

%{\bf [new]}

In fact, we can be somewhat more precise since only a subclass of
the Kundt spacetimes have these exceptional properties. In the
analysis (described below) it is found that a Kundt metric is
$\mathcal{I}$-non-degenerate if the metric functions $W_i(v,u,x^k)$ in
the canonical ({\it kinematic}) Kundt null frame are non-linear in
$v$ (i.e., $W_{i,vv} \neq 0$). Hence the exceptional spacetimes are
the {\it aligned algebraically special type-$II$}-Kundt spacetimes
or, in short (and consistent with the terminology of the above
theorem) {\it degenerate} Kundt spacetimes, in which there exists
a common null frame in which the geodesic, expansion-free,
shear-free and twist-free null vector $\ell$ is also the null
vector in which all positive boost weight terms of the Riemann
tensor are zero (i.e., the kinematic Kundt frame and the Riemann
type $II$ aligned null frame are {\it aligned}). We note that the
important Kundt-CSI and vanishing scalar invariant (VSI) spacetimes are degenerate Kundt spacetimes \cite{4DVSI,Higher,nDVSI,CSI}.

%%REF Kundt?

\section{Curvature operators and  curvature projectors}
In order to prove the main theorems we need to introduce some mathematical tools. These tools, although they are very simple, are extremely useful and powerful in proving these theorems.

A curvature operator, ${\sf T}$, is a tensor considered as a (pointwise) linear operator
\[ {\sf T}:~V\mapsto V, \]
for some vector space $V$, constructed from the Riemann tensor, its covariant derivatives, and the curvature invariants.

The archetypical example of a curvature operator is obtained by
raising one index of the Ricci tensor.  The Ricci operator is
consequently a mapping of the tangent space $T_p\mathcal{M}$ into
itself:
\[ {\sf R}\equiv(R^{\mu}_{~\nu}):~T_p\mathcal{M}\mapsto T_p\mathcal{M}. \]
Another example of a curvature operator is the Weyl tensor, considered as an operator, ${\sf C}\equiv (C^{\alpha\beta}_{\phantom{\alpha\beta}\mu\nu}$), mapping bivectors onto bivectors.

For a curvature operator, ${\sf T}$, consider an eigenvector ${\sf
v}$ with eigenvalue $\lambda$; i.e., ${\sf T}{\sf v}=\lambda{\sf
v}$. If $d=\mathrm{dim}(V)$ and $n$ is the dimension of the
spacetime, then the eigenvalues of ${\sf T}$ are $GL(d)$ invariant. Since
the Lorentz transformations, $O(1,n-1)$, will act via a
representation $\Gamma\subset GL(d)$ on ${\sf T}$,
 \emph{the eigenvalues of a curvature operator is an $O(1,n-1)$-invariant curvature scalar}.
Therefore, curvature operators naturally provide us with a set of
curvature invariants (not necessarily polynomial invariants)
corresponding to the set of distinct eigenvalues: $\{\lambda_A
\}$. Furthermore, the set of eigenvalues are uniquely determined
by the polynomial invariants of ${\sf T}$ via its characteristic
equation. The characteristic equation, when solved, gives us the
set of eigenvalues, and hence these are consequently determined by
the invariants.
{\footnote { Note that the 'corresponding eigenvalues'
of the operators constructed from the covariant derivatives of the Riemann tensor
are also related to scalar curvature invariants built from
covariant derivatives.}}

We can now define a number of associated curvature operators. For
example,  for an eigenvector ${\sf v}_A$ so that ${\sf T}{\sf
v}_A=\lambda_A{\sf v}_{A}$, we can construct the annihilator
operator:
\[ {\sf P}_A\equiv ({\sf T}-\lambda_{A}{\sf 1}).\]
Considering the Jordan block form of ${\sf T}$, the eigenvalue ${\lambda_A}$ corresponds to a set of Jordan blocks. These blocks are of the form:
\[ {\sf B}_A=\begin{bmatrix}
\lambda_A & 0 & 0& \cdots & 0 \\
1 & \lambda_A & 0& \ddots  & \vdots \\
0      & 1 &\lambda_A& \ddots & 0 \\
\vdots & \ddots     &\ddots& \ddots & 0 \\
0    &     \hdots &0   &  1      & \lambda_A
\end{bmatrix}.\]
There might be several such blocks corresponding to an eigenvalue
$\lambda_A$; however, they are all such that $({\sf
B}_A-\lambda_A{\sf 1})$ is nilpotent and hence there exists an
$n_{A}\in \mathbb{N}$ such that  ${\sf P}_A^{n_A}$ annihilates the
whole vector space associated to the eigenvalue $\lambda_A$.

This implies that we can define a set of operators $\widetilde{\bot}_A$ with eigenvalues $0$ or $1$ by considering the products 
\[ \prod_{B\neq A}{\sf P}^{n_B}_B=\Lambda_A\widetilde{\bot}_A,\] 
where $\Lambda_A=\prod_{B\neq A}(\lambda_A-\lambda_B)^{n_B}\neq 0$ 
(as long as $\lambda_B\neq \lambda_A$ for all $B$). Furthermore, we can now define
\[ \bot_A\equiv {\sf 1}-\left({\sf 1}-\widetilde{\bot}_A\right)^{n_A}  \]
where $\bot_A$
is a \emph{curvature projector}. The set of all such curvature
projectors obeys: \beq {\sf 1}=\bot_1+\bot_2+\cdots+\bot_A+\cdots,
\quad \bot_A\bot_B=\delta_{AB}\bot_A. \eeq We can use these
curvature projectors to decompose the operator ${\sf T}$: \beq
{\sf T}={\sf N}+\sum_A\lambda_A\bot_A. \label{decomp} \eeq The
operator ${\sf N}$ therefore contains  all the information not
encapsulated in the eigenvalues $\lambda_A$. From the Jordan form
we can see that ${\sf N}$ is nilpotent; i.e., there exists an
$n\in\mathbb{N}$ such that ${\sf N}^n={\sf 0}$. In particular, if
${\sf N}\neq 0$, then ${\sf N}$ is a negative/positive
boost weight operator which can be used to lower/raise the
boost weight of a tensor.

Considering the Ricci operator, or the Weyl operator, we can show
that (where the type refers to either Ricci type or Weyl type):
\begin{itemize}
\item{} Type I: ${\sf N}={\sf 0}$, $\lambda_A\neq 0$.
\item{} Type D: ${\sf N}={\sf 0}$, $\lambda_A\neq 0$.
\item{} Type II: ${\sf N}^3={\sf 0}$, $\lambda_A\neq 0$.
\item{} Type III: ${\sf N}^3={\sf 0}$, $\lambda_A=0$.
\item{} Type N: ${\sf N}^2={\sf 0}$, $\lambda_A=0$.
\item{} Type O: ${\sf N}={\sf 0}$, $\lambda_A=0$.
\end{itemize}

Consider a curvature projector $\bot: T_p\mathcal{M}\mapsto T_p\mathcal{M}$. Then, for a Lorentzian spacetime there are 4 categories:
\begin{enumerate}
\item{} Timelike: For all $v^{\mu}\in T_p\mathcal{M}$, $v_{\nu}(\bot)^{\nu}_{~\mu}v^{\mu}\leq 0$.
\item{} Null: For all $v^{\mu}\in T_p\mathcal{M}$, $v_{\nu}(\bot)^{\nu}_{~\mu}v^{\mu}= 0$.
\item{} Spacelike: For all $v^{\mu}\in T_p\mathcal{M}$, $v_{\nu}(\bot)^{\nu}_{~\mu}v^{\mu}\geq 0$.
\item{} None of the above.
\end{enumerate}

In the following, we shall consider a complete set of curvature
projectors: $\bot_A: T_p\mathcal{M}\mapsto T_p\mathcal{M}$. These
projectors can be of any of the aforementioned categories and we
are going to use the Segre-like notation to characterize the set
with a comma separating time and space. For example, $\{1,111\}$
means we have 4 projectors: one timelike, and three spacelike. A
bracket indicates that the image of the projectors are of
dimension 2 or higher; e.g., $\{(1,1)11\}$ means that we have two
spacelike operators, and one with a 2 dimensional image. If there
is a null projector, we automatically have a second null
projector. Given an NP frame $\{
\ell_{\mu},n_{\mu},m_{\mu},\bar{m}_{\mu}\}$, then a null-projector
can typically be:
\[ (\bot_1)^{\mu}_{~\nu}=-\ell^{\mu}n_{\nu}. \]
Note that $\bot_1^2=\bot_1$, but it is not symmetric. Therefore, acting from the left and right gives two different operators. Indeed, defining
\[ (\bot_2)^{\mu}_{~\nu}\equiv g_{\nu\alpha}g^{\mu\beta}(\bot_1)^{\alpha}_{~\beta}, \]
we get a second null-projector being orthogonal to $\bot_1$. The
existence of null-projectors enables us to pick out certain null
directions; however, note that the null-operators, with respect to
the aforementioned Newman-Penrose (NP) frame, are of boost weight
0 and so cannot be used to lower/raise the boost weights. In
particular, considering the combination $\bot_1+\bot_2$ we see
that the existence of null-projectors implies the existence of
projectors of type $\{(1,1)(11)\}$.

The existence of curvature projectors is important due to the following result:
\begin{thm}\label{thm:lochom}
Consider a spacetime metric and assume that there exist curvature projectors of type $\{1,111\}$, $\{1,1(11)\}$ or $\{1,(111)\}$. Then the spacetime is \emph{$\mathcal{I}$-non-degenerate}.
\end{thm}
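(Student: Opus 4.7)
The plan is to use the existence of a full complement of timelike/spacelike projectors to canonically extract an orthonormal frame (up to a compact, possibly discrete, gauge) from the curvature data, and then to apply a Cartan-type rigidity argument to rule out any metric deformation that leaves every invariant fixed.

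For type $\{1,111\}$, the timelike projector has a one-dimensional image which we may normalize to a unit timelike vector $e_0$ (unique up to sign), and each of the three spacelike projectors similarly yields a unit spacelike vector $e_i$ (unique up to sign). The orthogonality relations $\bot_A\bot_B=\delta_{AB}\bot_A$, together with the Lorentzian signature of the restriction of $g$ to each image, guarantee that $\{e_0,e_1,e_2,e_3\}$ is a pseudo-orthonormal tetrad, canonical up to a finite reflection group. For types $\{1,1(11)\}$ and $\{1,(111)\}$ the same procedure yields a canonical flag of mutually orthogonal subspaces whose residual internal gauge is the compact group $O(2)$ or $O(3)$ respectively. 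In every case the residual subgroup $K\subset O(1,3)$ stabilising the canonical structure is compact.

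Now consider the components, in such a canonical frame, of the Riemann tensor and of each $\nabla^{(k)}R$ that was used to build the projectors (and, by iteration, of all higher covariant derivatives). Because the frame is built out of the curvature itself and its invariants, every frame component is a scalar constructed from the curvature; and because the residual gauge $K$ is compact, classical invariant theory (polynomial invariants separate orbits of a compact group action) implies that the set of scalar polynomial invariants, combined with the projector spectrum, determines every frame component up to the action of $K$. Concretely, if $\hat{g}_\tau$ is a metric deformation sharing every invariant with $g$, then by continuity of the eigenvalues of the curvature operators the deformed projectors remain of the same algebraic type, and the resulting canonical frame $\{\hat{e}_a(\tau)\}$ for $\hat{g}_\tau$ carries the same frame components of $R,\nabla R,\nabla^{(2)} R,\ldots$ as $\{e_a\}$ for $g$, up to a $K$-rotation. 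The linear map $e_a\mapsto\hat{e}_a(\tau)$ is then a linear isometry from $(T_p\mathcal{M},g)$ to $(T_p\mathcal{M},\hat{g}_\tau)$ that intertwines the curvature tensors of all orders. A Cartan--Ambrose--Hicks argument (available in the analytic setting permitted by the introduction's footnote) extends this to a local isometry of $(\mathcal{M},g)$ with $(\mathcal{M},\hat{g}_\tau)$, contradicting condition (3) in the definition of a metric deformation. Hence no such $\hat{g}_\tau$ exists and $g$ is $\mathcal{I}$-non-degenerate.

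The principal obstacle --- and the precise reason the signature structure of the projectors matters --- is controlling the residual Lorentz isotropy of the canonical structure. If even a single null projector were present instead of a timelike one, a one-parameter boost stabiliser would survive: boosts rescale the boost-weight components of every curvature tensor while leaving each polynomial invariant fixed, producing a continuous family of metrics with identical $\mathcal{I}$. This is exactly the failure mode exhibited by the Kundt class, and the hypothesis of one timelike and three (possibly coalescing) spacelike projectors is what eliminates it. The technical heart of the proof is therefore the verification that the existence of projectors of the listed three types really does kill every boost and null rotation, reducing the curvature isotropy to the compact subgroup of $O(1,3)$ on which the orbit-separation argument depends.
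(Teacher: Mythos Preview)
Your argument is correct and rests on the same core idea as the paper's: once a timelike projector sits alongside spacelike ones, the residual frame freedom is a compact subgroup of $O(1,3)$ (trivial, $O(2)$, or $O(3)$), and polynomial invariants separate orbits of a compact group, so the frame components of every $\nabla^{(k)}R$ are fixed by the invariants up to that compact action. For the $\{1,1(11)\}$ and $\{1,(111)\}$ cases the paper invokes exactly this compact-orbit-separation result (citing Pr\"ufer--Tricerri--Vanhecke), as you do.

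Two points of comparison are worth noting. First, the paper's treatment of the $\{1,111\}$ case is more elementary than yours: it simply contracts each index of an arbitrary curvature tensor with the four rank-one projectors to form $R[ij\ldots k]_{\alpha\beta\ldots\delta}$ and observes that the full contraction $R[ij\ldots k]_{\alpha\beta\ldots\delta}R[ij\ldots k]^{\alpha\beta\ldots\delta}$ is, up to a sign, the square of the single frame component $R_{ij\ldots k}$---no general invariant theory is needed. Second, you make explicit the rigidity step (matching frame components of all $\nabla^{(k)}R$ forces local isometry, hence rules out a nontrivial deformation) which the paper leaves implicit in the sentence ``and we get that the spacetime is $\mathcal{I}$-non-degenerate.'' The right name for that step is the Cartan equivalence method (Karlhede algorithm in the relativity literature) rather than Cartan--Ambrose--Hicks, which in its standard form concerns parallel curvature; your use of it is a slight terminological slip but the intended argument is sound.
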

\begin{proof}
Consider first the case $\{1,111\}$. For any given curvature tensor, $R_{\alpha\beta...\delta}$, we can  construct the curvature tensor
\[ R[ij...k]_{\alpha\beta...\delta}\equiv R_{\mu\nu...\lambda}(\bot_i)^{\mu}_{~\alpha}(\bot_j)^{\nu}_{~\beta}...(\bot_k)^{\lambda}_{~\delta} \,. \]
This enables us to consider the curvature invariant $
R[ij...k]_{\alpha\beta...\delta} R[ij...k]^{\alpha\beta...\delta}$
which is, up to a constant factor, the square of the component
$R_{ij...k}$. This implies that it is determined by the invariant
(up to a sign) and we get that \emph{the spacetime is
$\mathcal{I}$-non-degenerate.}

Consider now the case $\{1,1(11)\}$. We note that in this case we cannot isolate all components of the curvature tensors. However, we can uniquely define tensors $r^{(A)}_{IJ...K}$, $I,J,...=3,4$ by contractions with $\bot_i$. The curvature invariants will now be $SO(2)$-invariant polynomials in the components of $r^{(A)}_{IJ...K}$. Hence, since $SO(2)$ is compact, the polynomials will separate the $SO(2)$ orbits. Hence, by a similar proof as in \cite{prufer} we get that \emph{the spacetime is $\mathcal{I}$-non-degenerate.}

Lastly, consider the case $\{1,(111)\}$. In this case we can define tensors $r^{(A)}_{IJ...K}$, $I,J,...=2,3,4$ by contractions with $\bot_i$. The curvature invariants will be $SO(3)$-invariant which is again compact. Hence, using a similar argument as in \cite{prufer} we get that \emph{the spacetime is $\mathcal{I}$-non-degenerate.}
\end{proof}

\section{Riemann type $I/G$}
Let us consider first the case where the Riemann tensor is of type
I or $G$. This corresponds to the three cases: Ricci type $I$, Weyl
(Petrov) type $I$, and Ricci and Weyl canonical frames not aligned.
We shall consider these in turn.
\subsection{Ricci type $I$}
This case consists of the following Segre types: $\{1,111\}$, $\{1,1(11)\}$ $\{1,(111)\}$, $\{z\bar{z}11\}$,  $\{z\bar{z}(11)\}$.
\subsubsection{Segre type $\{1,111\}$:} Here the eigenvalues of the Ricci operator are all distinct and we can diagonalize the Ricci operator:
\[ {\sf R}=\diag(\lambda_1,\lambda_2,\lambda_3,\lambda_4).\]
It now follows from Theorem \ref{thm:lochom} that  \emph{the spacetime is $\mathcal{I}$-non-degenerate.}

Indeed, to determine the spacetime it is sufficient to consider $R_{\mu\nu;\alpha}$.
Choosing an orthonormal frame, ${\bf e}_i$, aligned with the eigendirections of ${\sf R}$:
\[ R_{ij;k}=\lambda_{i,k}g_{ij}+(\lambda_{i}-\lambda_j)\Gamma_{ijk},\]
where $ \Gamma_{ijk}$ are the connection coefficients, we find
that all connection coefficients must be determined by the
curvature invariants.
\subsubsection{Segre type $\{1,1(11)\}$:} This is the special case of above where we have
$\lambda_3=\lambda_4$. Using Theorem \ref{thm:lochom} \emph{the spacetime is
$\mathcal{I}$-non-degenerate.}
\subsubsection{Segre type $\{1,(111)\}$:}  Here we have $\lambda_2=\lambda_3=\lambda_4$ and from Theorem \ref{thm:lochom} we have that \emph{the spacetime is $\mathcal{I}$-non-degenerate.}

\subsubsection{Segre type $\{z\bar{z}11\}$ and $\{z\bar{z}(11)\}$:}
In this case, the Ricci operator has two complex conjugate eigenvalues. We can always find an orthonormal frame $\{{\bf e}_{i}\}$, so that the Ricci operator takes the form
\[ {\sf R}=\begin{bmatrix}
a & b & 0 & 0 \\
-b& a & 0 & 0 \\
0 & 0 & \lambda_3 & 0 \\
0 & 0 & 0 &\lambda_4
\end{bmatrix}.\]
We can now consider the complex transformation mapping the basis vectors ${\bf e}_0$ and ${\bf e}_1$ onto the eigenvectors ${\bf v}_0$ and ${\bf v}_1$:
\beq
{\bf v}_0= \tfrac{1}{\sqrt{2}}\left({\bf e}_0+\i{\bf e}_1\right), \qquad
{\bf v}_1= \tfrac{1}{\sqrt{2}}\left(\i{\bf e}_0+{\bf e}_1\right),
\eeq
with inverse
\beq
{\bf e}_0 = \tfrac{1}{\sqrt{2}}\left({\bf v}_0-\i{\bf v}_1\right), \qquad
{\bf e}_1 = \tfrac{1}{\sqrt{2}}\left(-\i{\bf v}_0+{\bf v}_1\right).
\eeq
We note that ${\bf v}_0\cdot{\bf v}_0=-1$, ${\bf v}_1\cdot{\bf v}_1=1$, ${\bf v}_0\cdot{\bf v}_1=0$ and so the set $\{{\bf v}_0,{\bf v}_1,{\bf e}_2,{\bf e}_3\}$ can be considered as an orthonormal frame. In this frame the Ricci operator becomes diagonal:
\[ {\sf R}=\diag(\lambda_1,\bar{\lambda}_1,\lambda_2,\lambda_3).\]
Therefore, we have a set of curvature projectors of the form
$\{1,111\}$ or $\{1,1(11)\}$ and  we can use Theorem
\ref{thm:lochom}. The only difference is that the invariants
associated to the complex frame can now be complex; however, the
result still stands. Using the inverse transformation, which
induces a transformation between the invariants from the complex
frame to the real frame, we obtain the curvature components of the
real frame. Therefore we can conclude that \emph{the spacetime is
$\mathcal{I}$-non-degenerate.}

\subsection{Weyl type $I$ (Petrov type $I$)}
For the Weyl tensor any non-trivial isotropy would make it
algebraically special. The isotropy group of the Weyl tensor is
the subgroup of the Lorentz group whose action on the Weyl tensor
leaves it invariant; for example a Petrov type D Weyl tensor has a
boost-spin isotropy group. So for the Weyl tensor to be of type $I$
requires that the isotropy group is trivial. We therefore expect
that we will be able to determine a unique frame using the
curvature invariants.

We use the bivector formalism and write the Weyl tensor, $C_{\alpha\beta\mu\nu}$, as an operator in 6-dimensional bivector space, ${\sf C}=(C^A_{~B})$. Using the following index convention:
\[ [23]\leftrightarrow 1,~[31]\leftrightarrow 2, ~[12]\leftrightarrow 3, ~[10]\leftrightarrow 4,~[20]\leftrightarrow 5,~[30]\leftrightarrow 6, \]
a type I Weyl tensor can always be put into the following
canonical form \cite{HallBook}: 
\beq {\sf C}=\begin{bmatrix}
a_1 & 0 & 0 & b_1 & 0 & 0 \\
0 & a_2 & 0 & 0 & b_2 & 0 \\
0 & 0 & a_3 & 0 & 0 & b_3 \\
-b_1 & 0 & 0 & a_1 & 0 & 0 \\
0 & -b_2 & 0 & 0 & a_2 & 0 \\
0 & 0 & -b_3 & 0 & 0 & a_3
\end{bmatrix}\label{Coperator}
\eeq where $\sum_ia_i=\sum_ib_i=0$ and not all of the $a_i$, $b_i$
are zero.

First we note that the eigenvalues of ${\sf C}$ are $a_i\pm \i
b_i$. As explained above, $a_i$ and $b_i$ are uniquely determined
by the zeroth order Weyl invariants. The eigenbivectors are
$F^A=\delta^A_i\pm \i\delta^A_{3+i}$. We can therefore construct
annihilator operators, $({\sf C}-\lambda{\sf 1})$, and projection operators as before (the only difference is that
${\sf C}$ is 6-dimensional). The eigenbivectors correspond to
(complex) antisymmetric tensors. For example, consider the
eigenbivector with eigenvalue $a_1+\i b_1$:
\[ F=\frac 12F_{\mu\nu}{\mbold\omega}^\mu\wedge{\mbold\omega}^\nu={\mbold\omega}^2\wedge{\mbold\omega}^3-\i {\mbold\omega}^1\wedge{\mbold\omega}^0.\]
Hence, from this we can construct an operator
\beq
{\sf P}_1=(F^{\mu}_{~\alpha}\bar{F}^{\alpha}_{~\nu})=
\begin{bmatrix}
1 & 0 & 0 & 0 \\ 0 & 1 & 0 & 0 \\ 0 & 0 & -1 & 0 \\ 0 & 0 & 0 & -1
\end{bmatrix}\, .
\eeq For the other eigenbivectors we then get (analogously):
\[ {\sf P}_2=\diag(1,-1,1,-1), \quad {\sf P}_3=\diag(1,-1,-1,1).\]
Thus the linear set $\{{\sf 1},{\sf P}_i\}$ span all diagonal
matrices; in particular, we can construct the projection
operators: \beq
\bot_1&=& \tfrac 14({\sf 1}+{\sf P}_1+{\sf P}_2+{\sf P}_3), \nonumber \\
\bot_2&=& \tfrac 14({\sf 1}+{\sf P}_1-{\sf P}_2-{\sf P}_3), \nonumber \\
\bot_3&=& \tfrac 14({\sf 1}-{\sf P}_1+{\sf P}_2-{\sf P}_3), \nonumber \\
\bot_4&=& \tfrac 14({\sf 1}-{\sf P}_1-{\sf P}_2+{\sf P}_3) \,.
\nonumber \eeq It is clear that we will get 3 operators, ${\sf
P}_i$, as long as the 3 sets of complex eigenvalues,
$\lambda_i=a_i+\i b_i$, are all different. Since
$\sum_i\lambda_i=0$, this can only fail when: \beq
\lambda_1=\lambda_2, ~\lambda_3=-2\lambda_1, \nonumber \\
\lambda_1=\lambda_2=\lambda_3=0. \nonumber
\eeq
The first of these is actually Weyl type $D$, while the latter is Weyl type $O$; hence, these are excluded by assumption.

Therefore, we can conclude that as long as the Weyl type is $I$ (and
not simpler), we can define 4 projection operators of type
$\{1,111\}$. Therefore, from Theorem \ref{thm:lochom}, \emph{the
spacetime is $\mathcal{I}$-non-degenerate}.

At this stage we wish to remark on a certain subtlety in the
choice of eigenvectors.  From the Weyl tensor we can actually only
determine the product $F_{\mu\nu}F_{\alpha\beta}$. Therefore, we
can only construct the ``square'' ${\sf P}_1\otimes{\sf P}_1$. So
in order to get the operator ${\sf P}_1$ there is an ambiguity in
the choice of sign. Regarding the question of
$\mathcal{I}$-non-degeneracy as defined above this has no
consequence; however, it may have an effect on discrete changes to
the metric. This sign ambiguity results in a
permutation of the axes; essentially, we don't know which axis
corresponds to time. We will get back to this issue later but note
that this phenomenon will recur in several cases below.

\subsection{Ricci and Weyl canonical frames not aligned}

Consider now the case where both the Ricci tensor and the Weyl
tensor are algebraically special but where there does not exist a
null-frame such that both the Ricci tensor and the Weyl tensor has
only non-positive boost weights.

First, assume the Weyl type is $D$ and choose the Weyl canonical
frame. For Weyl type $D$ the Weyl operator is of the form of eq.
(\ref{Coperator}) with $\lambda_1=\lambda_2,
~\lambda_3=-2\lambda_1$. This immediately implies we have
projection operators of type $\{(1,1)(11)\}$.

In the Weyl canonical frame, the Ricci tensor must have  both
positive \emph{and} negative boost weight components (or else
there would exist a frame where they are aligned). Now, by
symmetry, we can consider three cases for the Ricci tensor (see Appendix \ref{notation} for $(R)_{b}$ notation):
\begin{enumerate}
\item{} $(R)_{+2}\neq 0$, $(R)_{-2}\neq 0$: Here, we use the
$(1,1)$-projection operator and we get a reduced Ricci operator of
the form (in the $\{\ell,n\}$ frame): \beq \widetilde{{\sf
R}}=\begin{bmatrix} a & b \\ c & a \end{bmatrix}, \quad bc\neq 0.
\eeq This gives two distinct eigenvalues $\lambda=a\pm\sqrt{bc}$,
and hence, two additional projection operators. This case
therefore reduces to the case $\{1,1(11)\}$ or $\{z\bar{z}(11)\}$
presented earlier. This spacetime is therefore
\emph{$\mathcal{I}$-non-degenerate.} \item{} $(R)_{+2}=0$,
$(R)_{+1}\neq 0$, $(R)_{-2}\neq 0$: For this case we note that the
square $R^{\mu}_{~\alpha}R^{\alpha}_{~\nu}$ necessarily has boost
weight +2 components. Therefore, using either
$R^{\mu}_{~\alpha}R^{\alpha}_{~\nu}$ or
$R^{\mu}_{~\alpha}R^{\alpha}_{~\nu}+qR^{\mu}_{~\nu}$, where $q$ is
a parameter, we can use the results of the previous paragraph.
This case is consequently \emph{$\mathcal{I}$-non-degenerate}.
\item{} $(R)_{\pm 2}=0$, $(R)_{+1}\neq 0$, $(R)_{-1}\neq 0$: Here
we consider $R^{\mu}_{~\alpha}R^{\alpha}_{~\nu}$ which necessarily
has non-zero boost weight $-2$ and $+2$ components. This case is
therefore also \emph{$\mathcal{I}$-non-degenerate}.
\end{enumerate}

Assume now Weyl type $N$ and choose the Weyl canonical frame such
that $C=(C)_{-2}$. In this frame, either $(R)_{+1}$ or $(R)_{+2}$
is non-zero. If $(R)_{+2}$ is zero, we replace the Ricci tensor
$R_{\mu\nu}$ with the square $R^{\mu}_{~\alpha}R^{\alpha}_{~\nu}$
in what follows. Therefore, assume that $(R)_{+2}\neq 0$. Consider
now the operator
\[ (T^\mu_{~\nu})\equiv C^\mu_{~\alpha\nu\beta}R^{\alpha\beta}\]
Under the above assumptions, this operator can be used to get projectors of type $\{(1,1)11\}$. Indeed, these projectors are aligned with the Weyl canonical frame. We can now use one of the spacelike projectors, $\bot_3$ (say), to construct the symmetric operator:
\[ (\hat{T}^\mu_{~\nu})\equiv C^{\mu\alpha}_{\phantom{\mu\alpha}\nu\beta}(\bot_3)^{\beta}_{~\alpha}+qR^{\mu}_{~\nu},\]
where $q$ is a parameter. We can use this operator to construct
the remaining projection operators so that we have a set
$\{1,111\}$. This case is therefore
\emph{$\mathcal{I}$-non-degenerate.}

For Weyl type $II$ we can decompose the Weyl tensor:
\[ {\sf C}={\sf N}+\sum_A\lambda_A\bot_A,\]
where the operator ${\sf N}$ is a ``Weyl'' operator of type $N$ while the piece $\sum_A\lambda_A\bot_A$ is a ``Weyl'' operator of type $D$. By assumption, the Ricci tensor is not aligned with the Weyl canonical frame; therefore, using the above results, this case is also \emph{$\mathcal{I}$-non-degenerate}.

Lastly, for Weyl type $III$, we can consider the square ${\sf C}^2$ which is a Weyl operator of type $N$. The above results imply that this case is \emph{$\mathcal{I}$-non-degenerate}.

\subsubsection{Summary} Therefore, we have shown that: \emph{If a
4-dimensional spacetime $(\mathcal{M},g)$ is either Ricci type $I$,
Weyl type $I$ or Riemann type $I/G$, then it is
$\mathcal{I}$-non-degenerate. }

\section{Algebraically special cases}

For the algebraically special cases the Riemann tensor itself does
not give enough information to provide us with all the required
projection operators. Indeed, in the algebraically special cases
it is also necessary to calculate the covariant derivatives. The
strategy is as follows: we will consider the two cases of Weyl
type $D$ and $N$ in detail. The second Bianchi identity will not
be imposed at this time because we aim to use these results on
more general tensors with the same symmetries, not necessarily the
Weyl tensor itself. Weyl type $II$ and $III$ will now follow from
these computations and Weyl type $O$ will be treated last.

We should also point out that for any symmetric tensor $S_{\mu\nu}$ we can always construct a Weyl-like tensor with the same symmetries as the Weyl tensor. If $S_{\mu\nu}$ is the trace-free Ricci tensor, the corresponding Weyl-like tensor is the so-called Pleba\'{n}ski tensor. Explicitly, given the trace-free part of $S_{\mu\nu}$, denoted $\hat{S}_{\mu\nu}$, the Pleba\'{n}ski tensor is given by
\beq
W^{\alpha\beta}_{\phantom{\alpha\beta}\mu\nu}\equiv \hat{S}^{[\alpha}_{~[\mu}\hat{S}^{\beta]}_{~\nu]}+\delta^{[\alpha}_{~[\mu} \hat{S}_{\nu]\gamma}\hat{S}^{\beta]\gamma} - \frac 16 \delta^{[\alpha}_{~[\mu}\delta^{\beta]}_{~\nu]}\hat{S}_{\gamma\epsilon}\hat{S}^{\gamma\epsilon}.
\eeq
Therefore, to any symmetric tensor there is an associated ``Pleba\'{n}ski'' tensor.

Henceforth we are going to use the NP-formalism where we introduce a null frame
$\{\ell,n,m,\bar{m}\}$.  (We will use the notation of \cite{kramer}; also see Appendix
A).  In order to get the desired results we introduce the canonical frames for the
various algebraic types.  For the Weyl tensor, $C$, this means that we express its
components in terms of the Weyl scalars $\Psi_i$.  Then using the NP-connection
coefficients, we can express the covariant derivative $\nabla C$ in terms of $\Psi_i$
and the connection coefficients. At this stage it is useful not to assume anything about the connection $\nabla$ (i.e., the tensor $C$ need not be the Weyl tensor of the connection). The advantage of this is that we can utilise the full formalism of projection operators without worrying about the compatibility of the Weyl tensor and the connection. Furthermore, the results obtained here will therefore be more general than what is indicated. These expressions are then utilised
to obtain the required results for the curvature tensor. Another important thing to note is that when taking
covariant derivatives, some of the components have terms which are partial derivatives of
$\Psi_i$, while other terms are algebraic in $\Psi_i$ and $\Gamma^i_{jk}$.  These
algebraic terms are most useful simply because they give algebraic relations
rather than differential ones.

\subsection{Weyl (Petrov) type $D$}
We choose the canonical frame for which $\Psi_2\neq 0$. From the
Weyl operator ${\sf C}$ we can construct projectors of type
$\{(1,1)(11)\}$ where the $(1,1)$-projector projects onto the
$\ell-n$-plane, while the $(11)$-projector projects onto the
$m-\bar{m}$-plane. In the following let us use the indices $a,b,..
$ for projections onto the $\ell-n$ plane and the indices
$i,j,...$ for projections onto the $m-\bar{m}$ plane.

Calculating $\nabla C$ we get the boost weight decomposition
\[ \nabla C=(\nabla C)_{+2}+(\nabla C)_{+1}+(\nabla C)_{0}+(\nabla C)_{-1}+(\nabla C)_{-2}.\]
The key observation  is that the positive boost weight components vanish if and only if $\ell^{\mu}\nabla_{\mu}\Psi_2=0$ and $\kappa=\sigma=\rho=0$. Therefore, the idea is to define the appropriate operators so that we can isolate the necessary components.

Consider the (projected) tensor:
\[T_{iab}\equiv C^j_{~ij(a;b)}\]
This tensor has the following structure,
\[T_{iab}=v_in_an_b+t_i(\ell_an_b+n_a\ell_b)+w_i\ell_a\ell_b,\]
where
\beq
v_i&\equiv & 3(\bar{\Psi}_2\bar{\kappa}m_i+\Psi_2{\kappa}\bar{m}_i), \\
t_i&\equiv & -\tfrac 32 (\pi{\Psi}_2-\bar{\tau}\bar{\Psi}_2)m_i-\tfrac 32 (\bar{\pi}\bar{\Psi}_2-\tau{\Psi}_2)\bar{m}_i, \\
w_i&\equiv & -3({\Psi}_2{\nu}m_i+\bar{\Psi}_2\bar{\nu}\bar{m}_i). \\
\eeq

Furthermore, define the trace-free tensor $\widehat{T}_{iab}\equiv T_{iab}+(1/2)(\ell_an_b+n_a\ell_b)T_{i~c}^{~c}$, and then the tensor
\[ S_{abcd}=\widehat{T}^i_{~ab}\widehat{T}_{icd}.\]
This tensor can be considered as an operator ${\sf S}=(S^A_{~B})$
mapping symmetric trace-free tensors onto symmetric trace-free
tensors. For simplicity, let us also consider the trace-free part
of $S_{abcd}$ so that
\[ \widehat{S}_{abcd}=v^iv_in_an_bn_cn_d+w^iw_i\ell_a\ell_b\ell_c\ell_d.\]
Consider the trace-free tensor $M^{ab}=xn^an^b+y\ell^a\ell^b$. The
operator $\widehat{{\sf S}}$ has eigenvalues $\lambda=\pm |v||w|, ~0$.
Therefore, as long as both $v^i$ and $w^j$ are non-zero, there are
three distinct eigenvalues. Assuming $\lambda\neq 0$, $M^{ab}$ is an
eigentensor if $x=|v|$ and $y=|w|$. In this case we can consider
the curvature projectors (up to scaling), $M^{ab}M_{cd}$. The
eigentensor $M^{ab}$ can again be considered as an operator ${\sf
M}=(M^a_{~b})$ mapping vectors onto vectors. The eigenvalues of
${\sf M}$ are $\lambda=\pm i|v||w|$; hence, this reduces to the
case of two complex eigenvalues.

We note that $v^iv_i=0$ if and only if $\kappa=0$. Furthermore, if
either of $|v|$ or $|w|$ is non-zero we can assume, by using the
discrete symmetry defined later by  eq. (\ref{dsym}), that
$w^iw_i\neq 0$. Therefore, $\kappa\neq 0$ (so that $|v|\neq 0$
also)  implies that the spacetime is
\emph{$\mathcal{I}$-non-degenerate}.

Therefore, assume  $\kappa=0$ and consider the symmetric tensor
\[
Q^a_{~b}=C^{ijka;l}C_{ijkb;l}.
\]
The trace-free part of this tensor is
\[ \widehat{Q}_{ab}\propto |\Psi_2|^2\left(|\sigma|^2+|\rho|^2\right)n_an_b+|\Psi_2|^2\left(|\lambda|^2+|\mu|^2\right)\ell_a\ell_b.\]
If $\left(|\sigma|^2+|\rho|^2\right)\left(|\lambda|^2+|\mu|^2\right)\neq 0$, then this tensor is of type $I$. So from the Ricci type $I$ analysis, this would imply that the spacetime is \emph{$\mathcal{I}$-non-degenerate}.

Let us next consider the non-aligned case where $\kappa=0$,
$\lambda=\mu=0$ and $|\sigma|^2+|\rho|^2\neq 0$. We can now
consider the mixed tensor:
 \[ \widehat{Q}_{ab}+\widehat{S}_{abcd}\widehat{Q}^{cd}.\]
This tensor is of type I if $(w^iw_i)\left(|\sigma|^2+|\rho|^2\right)\neq 0$ and consequently,
the spacetime is \emph{$\mathcal{I}$-non-degenerate}.

Assume now that $w^i=0$, $\kappa=0$, for which we still have an
unused discrete symmetry (eq.(\ref{dsym})). If
$\left(|\sigma|^2+|\rho|^2\right)\left(|\lambda|^2+|\mu|^2\right)=0$
we can therefore assume that $\rho=\sigma=0$. This spacetime is
thus \emph{Kundt}.

Lastly, consider the case when $w^iw_i\neq 0$, $\kappa=\rho=\sigma=0$. This automatically implies that the spacetime is \emph{Kundt}.

Let us also consider the differential $\ell^{\mu}\nabla_{\mu}\Psi_2$ which in general
(not assuming the Bianchi identities are satisfied) also contributes to $(\nabla
C)_{+1}$.  We note that the Weyl invariant $I$, for a Weyl type $D$ tensor, is given by
$I=3\Psi_2^2$.  Therefore, we can consider the curvature tensor defined by the gradient
$\nabla_{\mu}I=6\Psi_2\Psi_{2,\mu}$.  We can now use the $(1,1)$-projector and project
this gradient onto the $\ell-n$-plane:  $x_a\equiv\nabla_{a}I$.  
\begin{enumerate}
\item{} If $x_a$ is either time-like or space-like (and consequently
$\ell^{\mu}\nabla_{\mu}\Psi_2\neq 0$), we can construct another curvature projector (by
considering the operator $x^ax_b$) so that we have a set $\{1,1(11)\}$.  Therefore, this
case is \emph{$\mathcal{I}$-non-degenerate}.  
\item{} If $x_a$ is null or zero, then
either $n^ax_a=0$ or $\ell^ax_a=0$.  If $\ell^ax_a=0$ then
$\ell^{\mu}\nabla_{\mu}\Psi_2=0$ and does not contribute to positive boost weight
components.  Assume therefore that $\ell^ax_a\neq 0$, which implies that $x_a\propto
n_a$.  If $\nu=\lambda=\mu=0$, we can use the discrete symmetry eq.(\ref{dsym}) so that
$\ell^{\mu}\nabla_{\mu}\Psi_2=0$.

If any of $\nu$, $\lambda$ or $\mu$ is non-zero, then $(\nabla C)_{b<0}$ is
non-zero.  Hence, by contracting with $x_a$, we can straight-forwardly construct another
projection operator so that we get a set $\{1,1(11)\}$. Therefore, this case is 
\emph{$\mathcal{I}$-non-degenerate}.

\end{enumerate}

\subsubsection{Summary Weyl type $D$:}
A Weyl type $D$  spacetime is either $\mathcal{I}$-non-degenerate or Kundt. Moreover, for a Weyl type $D$  spacetime, if $\nabla C$ is of type $I$ or more general, then it is $\mathcal{I}$-non-degenerate.

\subsection{Weyl (Petrov) type $II$}

The Weyl type $II$ tensor can be decomposed as
\[ {\sf C}={\sf N}+\sum_{A}\lambda_A\bot_A.\]
By using the annihilator operators and the projection operators we can, up to 
scaling, isolate each term in this decomposition. Each term can thus be considered a curvature operator in its own right.

In particular, by considering only the curvature tensor
$\sum_{A}\lambda_A\bot_A$, this tensor is identical to a Weyl type
$D$ tensor. We can therefore use these results. In addition to
these results we do have an additional boost weight -2 operator
${\sf N}$. This breaks the discrete symmetry present in the Weyl
type $D$ tensor and therefore restricts the choice even more.
However, with minor modifications we obtain: \emph{a Weyl type $II$
spacetime is either $\mathcal{I}$-non-degenerate or Kundt.}

We also note that for a Weyl type $II$ spacetime, the Weyl invariant $I=3\Psi_2^2$ as for type $D$. Therefore, using a similar argument, \emph{a Weyl type $II$  spacetime, if $\nabla C$ is of type $I$ or more general, then it is $\mathcal{I}$-non-degenerate. }

\subsection{Weyl (Petrov) type $III$}
For the Weyl type $III$ case we get no non-trivial curvature
operators from the Weyl tensor itself. The first non-trivial projection operators appears at first covariant derivative; however, in order to delineate this case completely we need to consider second covariant derivatives.

We note that for the type $III$ Weyl operator, ${\sf C}^2\neq 0$
and is of type $N$. The proof for this case is therefore contained
in the Weyl type $N$ case considered below.

\subsection{Weyl (Petrov) type $N$}

Consider first the tensor
\[ T^{\mu}_{~\nu}=\nabla_{\gamma}C^{\alpha\beta\gamma\mu}\nabla^{\delta}C_{\alpha\beta\delta\nu},\]
whose boost weight 0 components are of the form (it has no positive boost weight components)
\[ (T_{\mu\nu})_0\propto \bar{\kappa}^2\bar{\Psi}_4^2\bar{m}_{\mu}\bar{m}_\nu+{\kappa}^2{\Psi}_4^2{m}_{\mu}{m}_\nu.\]
Therefore, if $\kappa\neq 0$, we can construct curvature operators
of type $\{ (1,1)11\}$.
%%%%% removed by Sig.. simplified the proof...
%\textbf{[COULD USE THIS TO CONSTRUCT A WEYL-LIKE TENSOR OF TYPE D. THIS WOULD SIMPLIFY THE PROOF.]}. Furthermore, for $\kappa\neq 0$, using these projectors we can define:
%\[ S^a_{~b}=\nabla^aC^{cdie}\nabla_eC_{cdib},\]
%which is, up to a constant, a \emph{null projector}.
%
%We now need to consider the second covariant derivatives $C_{\mu\nu\alpha\beta;(\gamma\delta)}$.
%
%Assume first that $\kappa\neq 0$ which implies we have the operators mentioned above and use this and consider the 2 tensors:
%\beq
%P_{ab}\equiv C^{cd}_{\phantom{cd}cd;(ab)}, \quad Q_{abij}=C_{ijcd;(ab)}S^{cd}
% \eeq
%Consider first $P_{ab}$ which, upon removing its trace:
%\[ P_{ab}\propto {\rm Re}(\kappa^2\Psi_4)n_an_b+{\rm Re}(\tau^2\Psi_4)\ell_a\ell_b\]
%Again this is type I as long as ${\rm Re}(\kappa^2\Psi_4){\rm Re}(\tau^2\Psi_4)\neq 0$ and hence, the spacetime is \emph{$\mathcal{I}$-non-degenerate}.
%
%Consider now the trace-free part of $Q^{abij}Q_{cdij}$ which as the form:
%
%\[ \widehat{Q^{abij}Q_{cdij}}\propto[{\rm Im}(\kappa^2\Psi_4)]^2n_an_bn_cn_d+[{\rm Im}(\tau^2\Psi_4)]^2\ell_a\ell_b\ell_c\ell_d.\]
%Therefore, if ${\rm Im}(\kappa^2\Psi_4){\rm Im}(\tau^2\Psi_4)\neq 0$ this spacetime is \emph{$\mathcal{I}$-non-degenerate}.
%%%%%%%%%%%%%%% END
%%%% NEW piece:
The curvature operator $T_{\mu\nu}$  gives rise to a ``Pleba\'{n}ski''
tensor of type $D$. Therefore, by considering second covariant
derivatives, it follows from the Weyl type $D$ analysis  that if
$\kappa\neq 0$, the spacetime is
\emph{$\mathcal{I}$-non-degenerate}.

Henceforth, assume that $\kappa=0$ (and therefore we have no
projectors from first derivatives). Consider $\Box
C_{\mu\nu\alpha\beta}$, which has the same symmetries as the Weyl
tensor itself. This tensor has no positive boost weight
components. Considering the boost weight 0 components, we note
that $\Box C_{\mu\nu\alpha\beta}$ is of type $II$ if and only if
$\rho\sigma\neq 0$. Therefore, if $\rho\sigma\neq 0$ we can use
the Weyl type $II$ analysis, and calculate $\nabla \Box C$; hence,
this spacetime is \emph{$\mathcal{I}$-non-degenerate}.

Therefore, consider the case where either $\sigma$ or $\rho$ are
zero. Define
\[ W^{\mu\nu}_{\phantom{\mu\nu}\epsilon\eta}\equiv C^{\alpha\beta\gamma\delta;(\mu\nu)}C_{\alpha\beta\gamma\delta;(\epsilon\eta)}.\]
To get a projection operator we note that the boost weight 0 components of $W^{\mu\alpha\beta\gamma}W_{\nu\alpha\beta\gamma}$ (it has no positive boost weight components) is of the form
\[ \left(W^{\mu\alpha\beta\gamma}W_{\nu\alpha\beta\gamma}\right)_0\propto |\Psi_4|^4|\rho \text{ or }\sigma|^8(\bar{m}^\mu m_\nu+m^{\mu} \bar{m}_{\nu})\]
Therefore, if either $\rho$ or $\sigma$ are non-zero, we can use
this operator and we get (at least) two curvature projectors
$\bot_1$ and $\bot_2$ of type $\{(1,1)(11)\}$. This means that we
can construct a Weyl-like tensor of type $D$. Hence, we can use
the type $D$ results. Therefore, by considering third derivatives
of the curvature tensors, if $\sigma$ or $\rho$ is non-zero, then
the spacetime is \emph{$\mathcal{I}$-non-degenerate}.

The remaining case, for which $\kappa=\rho=\sigma=0$, is a \emph{Kundt} spacetime.
\subsubsection{Summary Weyl type $III$ or $N$:} Therefore, we can conclude that a Weyl type $III$ or $N$ spacetime  is either $\mathcal{I}$-non-degenerate or Kundt.

\subsection{Algebraically special Ricci type}
Using the trace-free Ricci tensor, we can construct the Pleba\'{n}ski tensor, which is a Weyl-like tensor. The corresponding algebraic classification of the Pleba\'{n}ski tensor is called the Pleba\'{n}ski-Petrov (PP) classification. For the various algebraically special PP types we have the following Segre types:
\begin{itemize}
\item{} PP type $II$: $\{ 211\}$,
\item{} PP type $D$: $\{(1,1)(11)\}$, $\{(1,1)11\}$, $\{2(11)\}$, $\{z\bar{z}(11)\}$, $\{1,1(11)\}$
\item{} PP type $III$: $\{31\}$
\item{} PP type $N$: $\{(31)\}$, $\{(21)1\}$
\item{} PP type $O$: $\{(1,11)1\}$, $\{1,(111)\}$, $\{(211)\}$, $\{(1,111)\}$.
\end{itemize}
Now, since the Pleba\'{n}ski tensor is a tensor with the same
symmetries as  the Weyl tensor,  we can use the previous results
for the PP types $II$, $D$, $III$ and $N$. There is consequently
only the Weyl (Petrov) type $O$ and PP type $O$ case left to
consider.

\subsection{Weyl type $O$, PP-type $O$}

Let us consider the remaining Segre types assuming Weyl (Petrov) type $O$ and PP type $O$.
\subsubsection{Segre type $\{ (1,11)1\}$} Using the Bianchi identities we get
several differential constraints on the spin coefficients. For
this Segre type we have that $\Phi_{11}\neq 0$, so the Bianchi
identities immediately imply
$\kappa+\bar{\kappa}=\nu+\bar{\nu}=0$. In addition, we get the
following restrictions:
\[ \rho+\bar{\sigma}=s ~~(\text{real}), \quad \mu+\bar{\lambda}=m ~~(\text{real}).\] Furthermore,
after some manipulation of the remaining Bianchi identities, we
get
\[ \epsilon-\bar{\epsilon}=\gamma-\bar{\gamma}=\tau+\bar\tau+\pi+\bar\pi=0, \]
and
\[ DR=24s\Phi_{11}, \quad\Delta R=-24m\Phi_{11}.\]
We now split the analysis into 3 cases, according to whether
$\nabla_{\mu}R$ is timelike, spacelike or null.

If $\nabla_{\mu}R$ is timelike, we immediately have that this
spacetime is $\mathcal{I}$-non-degenerate since we can use
$(\bot_1)^{\mu}_{~\nu}=(\nabla^{\mu}R)(\nabla_{\nu}R)$ as a
timelike operator, and hence we obtain operators of type
$\{1,1(11)\}$.

If $\nabla_{\mu}R$ is spacelike, we can always use the remaining
freedom to choose $DR=\Delta R=0$. This implies that $m=s=0$.
Furthermore, $\delta R-\bar{\delta}R\neq 0$ which means we have an
additional spacelike projection operator. Therefore, we have a
set $\{(1,1)11\}$, which can be used to give a ``PP-type'' D
tensor. Hence, using second covariant derivatives, we find that
this is either Kundt or $\mathcal{I}$-non-degenerate.

Lastly, $\nabla_{\mu} R$ is null.  If
$\nabla_{\mu} R$ is zero, from the Bianchi identities we find that
this is a symmetric space, and hence, is actually locally
homogeneous (and Kundt). If $\nabla_{\mu}
R$ is null, we consider
$\delta\Phi_{11}-\bar{\delta}\Phi_{11}$. If this is non-zero we
get an additional projection operator and thus a set
$\{(1,1)11\}$. This would therefore give a ``PP-type'' D and
hence, by considering second derivatives, this is either Kundt or
$\mathcal{I}$-non-degenerate. Therefore, let us assume
$\delta\Phi_{11}-\bar{\delta}\Phi_{11}=0$. The Bianchi identities
now imply that $\alpha-\bar\beta=0$. Using the symmetric operator
$R^{\alpha\beta;\nu}R_{\alpha\beta;\mu}$ we get that this is
either of types $\{2(11)\}$, $\{(1,1)(11)\}$ or $\{(211)\}$. The
first two of these give a type $D$ ``Pleba\'{n}ski'' tensor which
means, by considering second derivatives, they are either
$\mathcal{I}$-non-degenerate or Kundt. For the last case,
$\{(211)\}$, we can combine with the Ricci operator to break the
symmetry down to type $\{(21)1\}$ which gives a ``Pleba\'{n}ski'' of
type $N$. Therefore, by considering third and fourth  derivatives, we get that this is
$\mathcal{I}$-non-degenerate or Kundt.
{\footnote{Note that this is the only case in which we have needed to utilize fourth
derivatives; it is possible, by explicitly calculating the components of
$R_{\mu\nu;\alpha\beta\gamma}$, that we need only consider third derivatives, but
we have not done this here.  }}

\subsubsection{Segre type $\{1,(111)\}$} This is actually Ricci type $I$ and is therefore $\mathcal{I}$-non-degenerate.
\subsubsection{Segre type $\{ (211)\}$} Choosing a frame where $\Phi_{22}$ is a constant
we get, after using the Bianchi identities (and some
manipulation),  $\kappa=\sigma=0$. We then calculate the second
derivatives and compute the operator
$R^{\alpha\beta;(\gamma\mu)}R_{\alpha\beta;(\gamma\nu)}$, which
gives operators of type $\{(1,1)(11)\}$, assuming $\rho\neq 0$.
This  again gives ``PP-type'' D tensor and hence, by calculating
third derivatives, this is $\mathcal{I}$-non-degenerate or Kundt.
\subsubsection{Segre type $\{(1,111)\}$} This is the maximally symmetric case and is therefore  Kundt.\\

In addition to Weyl-type $I$, Ricci-type $I$ or Riemann-type $I/G$, we have shown that there are $\mathcal{I}$-non-degenerate metrics with algebraically special curvature types and further conditions on the spin-coefficients.  These are summarized in Tables (\ref{table:1}) and (\ref{table:2}).

\begin{table}[h]
\begin{center}
\small{
\begin{tabular}{c|c}
\hline
 & \\
{\bf P-type} & {\bf Conditions}  \\
 & \\
\hline
 & \\
I & --- \\ & \\ \hline & \\
D or II & $\begin{array}{c}
                \kappa \neq 0 \\
                \kappa =0,\, (|\sigma|^2+|\rho|^2)(|\lambda|^2+|\mu|^2) \neq 0 \\
                \kappa=\lambda=\mu=0,\, |\sigma|^2+|\rho|^2 \neq 0
            \end{array}$ \\
 & \\ \hline & \\
N or III & $ \begin{array}{c}
                \kappa \neq 0 \\
                \kappa=0\, ;\, \sigma=0 \ \textnormal{or}\ \rho=0\ \textnormal{(not both)}
             \end{array}$ \\
 & \\ \hline
\end{tabular}
}
\caption{In each Petrov-type we list conditions on the spin-coefficients yielding distinct subcases for which the metric is $\mathcal{I}$-non-degenerate.}\label{table:1}
\end{center}
\end{table}

\begin{table}[h]
\begin{center}
\small{
\begin{tabular}{c|c}
\hline
 & \\
{\bf Segre type of $R_{\mu\nu}$} & {\bf Conditions}  \\
 & \\
\hline
 & \\
$\{1,(111)\}$ & --- \\ & \\ \hline & \\
$\{(211)\}$ & $\Phi_{22}$ \textnormal{const.},\, $\kappa=\sigma=0,\, \rho \neq 0$\ \textnormal{(3rd deriv.)} \\
& \\ \hline
& \\
$\{(1,11)1\}$ & $\begin{array}{c}
                \nabla_{\mu}R\ \textnormal{timelike} \\
                \\ \hline \\
                \nabla_{\mu}R\ \textnormal{spacelike}:\, DR=\Delta R=\kappa+\bar\kappa=\rho+\bar\sigma=\mu+\bar\lambda=0\ \\
                \textnormal{(2nd deriv. ; at least one of }\kappa,\, \sigma\ \textnormal{or}\ \rho\ \textnormal{is nonzero)} \\ \\ \hline \\
                \nabla_{\mu}R\ \textnormal{null}: \begin{array}{l}
                                                    \nabla_{\mu}R=0\ \textnormal{(symmetric space)} \\ \\ \hline \\
                                                    \nabla_{\mu}R\neq 0: \begin{array}{l}
                                                                            \delta\Phi_{11}-\bar\delta\Phi_{11} \neq 0\ \textnormal{(2nd deriv.)} \\
                                                                            \delta\Phi_{11}-\bar\delta\Phi_{11}= \alpha-\bar\beta=0\ \\ \textnormal{(2nd or 3rd and 4th deriv.)}
                                                                        \end{array}

                                                    \end{array}
            \end{array}$ \\
 & \\ \hline
\end{tabular}
} \caption{Within P-type $O$ and PP-type $O$ we list the Segre types
that contain $\mathcal{I}$-non-degenerate metrics. The $n$th
derivative conditions indicate that higher order constraints exist
on the spin-coefficients arising from $n$th order curvature
operators. These higher-order constraints provide sufficient
conditions for the metric to be $\mathcal{I}$-non-degenerate.  In
all cases, at least one of $\kappa$, $\sigma$ or $\rho$ is
nonzero.}\label{table:2}
\end{center}
\end{table}

\section{Curvature invariants}
We have addressed the question of what is the class of Lorentzian
manifolds that can be completely characterized by the scalar
polynomial invariants constructed from the Riemann tensor and its
covariant derivatives. Let us now consider the 'inverse' question:
given a set of scalar polynomial invariants, what can we say about
the underlying spacetime? In practice, it is somewhat tedious and
a lengthy ordeal to determine the spacetime from the set of
invariants. However, in most circumstances we only need some
partial results or we are dealing with special cases. Let us
discuss how to determine, from the invariants, whether the
spacetime is $\mathcal{I}$-non-degenerate.

We remind the reader that the zeroth order Weyl invariants are
$I$ and $J$, and if all Weyl invariants up to order $k$ vanish, we
will denote this by VSI$^W_{k}$.

\begin{prop}
If $27J^2\neq I^3$, then the spacetime is $\mathcal{I}$-non-degenerate.
\end{prop}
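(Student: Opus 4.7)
The plan is to reduce this proposition directly to Theorem \ref{thm:alggen} by recognizing $27J^2 - I^3$ as the algebraic discriminant that distinguishes Petrov type $I$ from the algebraically special types. Recall that the (self-dual) Weyl operator acts on the three-complex-dimensional space of self-dual bivectors as a traceless complex operator, so it has three complex eigenvalues $\lambda_1, \lambda_2, \lambda_3$ satisfying $\lambda_1+\lambda_2+\lambda_3 = 0$. The standard zeroth-order Weyl invariants are, up to normalization, $I \propto \lambda_1^2+\lambda_2^2+\lambda_3^2$ and $J \propto \lambda_1\lambda_2\lambda_3$, i.e.\ the elementary symmetric functions of the eigenvalues of this operator.

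Consequently the characteristic polynomial takes the (traceless cubic) form $\lambda^3 - \tfrac{I}{2}\lambda - J = 0$ with the customary normalization used in the Petrov literature. Its discriminant, as a cubic in $\lambda$ with vanishing quadratic term, is proportional to $I^3 - 27J^2$, and vanishes precisely when at least two of the $\lambda_i$ coincide. In Petrov's classification, three distinct complex eigenvalues correspond exactly to type $I$, a double eigenvalue (with $\lambda_1+\lambda_2+\lambda_3 = 0$ forcing $\lambda_3 = -2\lambda_1$) to types $D$ or $II$, and a triple (necessarily zero) eigenvalue to types $III$, $N$ or $O$. Hence the condition $27J^2 \neq I^3$ is equivalent to the Weyl tensor being of Petrov type $I$.

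Once Petrov (Weyl) type $I$ is established, Theorem \ref{thm:alggen} immediately yields $\mathcal{I}$-non-degeneracy, completing the proof. The only substantive step is the identification of $I^3 - 27J^2$ with the discriminant of the characteristic cubic of the complex Weyl operator; this is a classical algebraic computation and presents no conceptual obstacle, so I expect no real difficulty in carrying the argument out.
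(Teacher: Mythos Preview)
Your proposal is correct and follows exactly the same route as the paper: the paper's proof is the one-line observation that $27J^2\neq I^3$ forces Weyl (Petrov) type~$I$, after which Theorem~\ref{thm:alggen} applies. You have simply supplied the standard justification (via the discriminant of the traceless cubic characteristic polynomial of the self-dual Weyl operator) for that well-known fact, which the paper takes for granted.
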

This follows easily from the fact that
if $27J^2\neq I^3$ then the
spacetime is of Weyl (Petrov) type I.

If $27J^2= I^3\neq 0$ (Weyl type $II$ or D), we need to go to higher
order invariants in order to check whether it is
$\mathcal{I}$-non-degenerate or not. Ideally, we would like to
have a set of syzygies which gives the appropriate condition for
this to be the case. Such a complete set is not known. However, we
have found two such syzygies which gives a \emph{sufficient}
condition for $\mathcal{I}$-non-degeneracy.  A number of
invariants of $\nabla C$ were constructed with degrees ranging
from 2 to 4 (see Appendix \ref{nabcinvar} for details).  Imposing the minimal number of conditions required
for the normal form of a $\nabla C$-type $II$ (boost weight +3,+2,+1
vanish) or D (only boost weight 0 is nonzero) results in a degree
8 syzygy, $S_1=0$, and a degree 16 syzygy, $S_2=0$, amongst our
invariants.  Therefore if $S_1\neq 0$ or $S_2\neq 0$ then $\nabla
C$ is not of type $II$ or $D$.  Next, we showed that using the normal
form of a $\nabla C$-type $G$ (all components nonzero) or $H$ (boost
weight +3 vanish) or $I$ (boost weight +3, +2 vanish) then
$S_{1}\neq 0$ and $S_{2}\neq 0$.  It is important to note that
this implication refers only to the general types of $G$, $H$ and $I$
and there is no consideration of a secondary alignment type or any
further algebraic specialization within these types. Indeed, it is
possible that there is an algebraically special subcase, for
example of a $\nabla C$-type I, that results in $S_{1}=S_{2}=0$. A
stronger statement relating invariants of $\nabla C$ to its
algebraic type may be achieved by considering a different basis of
invariants and a finer algebraic classification of $\nabla C$.  Initially, one would attempt to construct a set of pure $\nabla C$ invariants that was complete within each $\nabla C$ algebraic type $G$, $H$, $I$ and $II$, including special subcases.  We have excluded type $D$ since such a set of invariants is equivalent to type $II$, and also types $III$, $N$ or $O$ since these invariants vanish.  By completeness of the set, an algebraic specialization would result in a dependence amongst invariants and hence syzygies arise characterizing the algebraically special type.  We now have the following invariant characterizations of $\mathcal{I}$-non-degeneracy.

\begin{prop}
If $27J^2= I^3$, but $S_1\neq 0$ or $S_2\neq 0$, then the spacetime is $\mathcal{I}$-non-degenerate.
\end{prop}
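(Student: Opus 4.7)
The plan is to read off the proposition directly from the construction of the syzygies $S_1$ and $S_2$ described immediately before the statement, combined with Theorem \ref{thm:algspecial}.

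First, I would note that the classical syzygy $27J^2 = I^3$ is precisely the invariant condition for the Weyl tensor to fail to be of Petrov type $I$, so under the hypothesis the Weyl tensor is algebraically special (Petrov type $II$, $D$, $III$, $N$, or $O$; the first two have $I \neq 0$, the last three have $I = J = 0$). In particular, the metric is algebraically special, and we are in the regime where Theorem \ref{thm:algspecial} applies.

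Next, I would invoke the explicit construction of $S_1$ and $S_2$ given just before the proposition: these syzygies were designed so that whenever $\nabla C$ is of type $II$ or $D$ (i.e., no positive boost weight components in the $\nabla C$-adapted frame) one has $S_1 = S_2 = 0$. Contrapositively, the hypothesis $S_1 \neq 0$ or $S_2 \neq 0$ forces $\nabla C$ to be of type $G$, $H$, or $I$, i.e., to have non-vanishing positive boost weight components in every null frame. Since $\nabla C$ is the trace-free part of $\nabla R$, the same statement transfers to $\nabla R$: its positive boost weight components cannot be annihilated by adjusting the frame, because doing so would also eliminate those of $\nabla C$, contradicting the conclusion we just drew. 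Hence $\nabla R$ is of type $I$ or more general in the sense of Theorem \ref{thm:algspecial}.

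Finally, Theorem \ref{thm:algspecial} then yields that the metric is $\mathcal{I}$-non-degenerate, completing the proof. The only non-trivial step is the transfer from ``$\nabla C$ of type $I$ or more general'' to ``$\nabla R$ of type $I$ or more general''; the main obstacle here is to rule out that Ricci-derivative contributions could conspire to cancel the Weyl-derivative positive boost weight components in some frame. This is handled by observing that the trace-free Weyl piece is an irreducible Lorentz summand of $\nabla R$, so its algebraic type controls the positive boost weight alignment independently of the trace (Ricci) pieces — exactly the viewpoint already taken in the case analyses of Section 5.
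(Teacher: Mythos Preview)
Your argument is correct and follows the same logic the paper relies on (which gives no separate proof beyond the preceding discussion). Two small remarks. First, the contrapositive you quote only excludes $\nabla C$ of type $II$ or $D$; to conclude that $\nabla C$ is of type $G$, $H$, or $I$ you must also rule out types $III$, $N$, $O$, which follows because the syzygies $S_1,S_2$ are polynomial (with no constant term) in pure $\nabla C$ invariants and all such invariants vanish for those types. Second, the transfer step from $\nabla C$ to $\nabla R$ and the appeal to Theorem~\ref{thm:algspecial} is valid but unnecessary here: the summaries at the end of the Weyl type $D$ and type $II$ subsections in Section~5 already state the needed implication directly in terms of $\nabla C$ (``if $\nabla C$ is of type $I$ or more general, then the spacetime is $\mathcal{I}$-non-degenerate''), so no passage through $\nabla R$ or irreducibility argument is required.
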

The remaining cases are when both $I$ and $J$ are zero, and hence, the spacetime is VSI$^W_{0}$:
\begin{prop}
Assume a spacetime is VSI$^W_{0}$. Then:
\begin{enumerate}
\item{} If it is \emph{not} VSI$^W_{1}$, it is $\mathcal{I}$-non-degenerate.
 \item{} If it is VSI$^W_{1}$, but \emph{not} VSI$^W_{2}$, it is $\mathcal{I}$-non-degenerate.
\end{enumerate}
\end{prop}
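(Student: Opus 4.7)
The plan is to proceed by contrapositive, leveraging Theorem \ref{thm:main} together with the boost-weight analyses carried out for Weyl types $N$ and $III$ in the preceding section. Suppose the spacetime is VSI$^W_0$ but is \emph{not} $\mathcal{I}$-non-degenerate. By Theorem \ref{thm:main} and the subsequent remark identifying the non-$\mathcal{I}$-non-degenerate metrics with degenerate Kundt spacetimes, there is a canonical null frame $\{\ell,n,m,\bar m\}$ in which $\kappa=\sigma=\rho=0$ and in which the Weyl tensor $C$ and all its covariant derivatives have vanishing positive boost-weight components. The assumption $I=J=0$ of VSI$^W_0$ forces Weyl type $III$, $N$, or $O$, so the boost-weight-$0$ part of $C$ also vanishes; hence $C$ has only boost weight $\le -1$ components in this frame.

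For part (1), I would then argue that every first-order polynomial Weyl invariant vanishes. Any such invariant is a fully contracted scalar, hence of total boost weight $0$, built from copies of $C$ and $\nabla C$. Because $C$ supplies only non-positive boost weight, such a contraction can be nonzero only if $\nabla C$ supplies matching positive boost-weight frame components. In the Weyl type $N$ analysis of the preceding section, the relevant boost-weight-$0$ content of $\nabla C$ was isolated through the operator $T^{\mu}{}_{\nu}=\nabla_{\gamma}C^{\alpha\beta\gamma\mu}\nabla^{\delta}C_{\alpha\beta\delta\nu}$ and shown to be proportional to $\kappa^{2}\Psi_{4}^{2}$ (plus its complex conjugate); the Weyl type $III$ case yields an analogous $\kappa$-proportionality. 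More systematically, the only NP spin coefficients capable of raising boost weight upon differentiation are $\kappa,\sigma,\rho$, all of which vanish in our frame, so no positive boost-weight piece of $\nabla C$ survives and every first-order Weyl invariant must vanish. This contradicts the failure of VSI$^W_1$ assumed in part (1).

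For part (2) the same bookkeeping applies one derivative higher. With $\kappa=0$ already imposed, the boost-weight-$0$ content of $\nabla^{2}C$ is captured by $\Box C$ and by $W^{\mu\nu}{}_{\epsilon\eta}\equiv C^{\alpha\beta\gamma\delta;(\mu\nu)}C_{\alpha\beta\gamma\delta;(\epsilon\eta)}$; these were shown in the Weyl type $N$ subsection to acquire boost weight $0$ only through $\rho\sigma$ (making $\Box C$ of type $II$) and through $|\Psi_{4}|^{4}|\rho \text{ or } \sigma|^{8}$ respectively, with a parallel statement for Weyl type $III$. Since $\rho=\sigma=0$, every second-order Weyl invariant vanishes, yielding VSI$^W_2$ and contradicting the hypothesis of part (2). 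The main obstacle is to verify that the specific operators singled out in the type $N$/$III$ analyses are genuinely representative of \emph{all} first- and second-order Weyl invariants for this boost-weight-$0$ accounting; this reduces to a systematic NP enumeration confirming that $\kappa,\sigma,\rho$ are indeed the only spin coefficients capable of contributing positive boost-weight pieces to $\nabla^{k}C$, which is standard once the degenerate-Kundt alignment has been established.
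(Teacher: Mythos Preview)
Your contrapositive route through Theorem~\ref{thm:main} is sound and is essentially the mechanism the paper has in mind (no separate proof is given there; the proposition is meant to follow from the type~$N$/$III$ analysis of \S5 together with the VSI results cited from \cite{4DVSI}). The structure ``not $\mathcal{I}$-non-degenerate $\Rightarrow$ degenerate Kundt $\Rightarrow$ $\kappa=\sigma=\rho=0$ in the aligned frame $\Rightarrow$ VSI$^W_k$'' is correct.

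There is, however, a genuine gap in your boost-weight bookkeeping. You argue that $\nabla C$ acquires no \emph{positive} boost-weight piece and then conclude that every first-order Weyl invariant vanishes. That inference is incomplete: an invariant built purely from $(\nabla C)_0$ components would also be boost-weight~$0$, and your argument does not exclude such contributions. For Weyl type~$N$ this does not bite---since $C$ sits entirely at boost weight $-2$, a boost-weight~$0$ piece in $\nabla C$ would require a boost-weight~$+2$ connection term, i.e.\ $\kappa$ alone, and $\kappa=0$ in Kundt, so indeed $(\nabla C)_0=0$. For Weyl type~$III$, however, $C$ has a boost-weight~$-1$ sector ($\Psi_3$), and $(\nabla C)_0$ receives contributions from $D\Psi_3$ and from boost-weight~$+1$ spin coefficients. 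Here your assertion that ``$\kappa,\sigma,\rho$ are the only spin coefficients capable of raising boost weight'' is not correct: $\epsilon$ has boost weight~$+1$ as well. One can gauge $\epsilon=0$ by choice of frame, but even then $D\Psi_3$ remains, so $(\nabla C)_0$ need not vanish componentwise. What actually happens is structural: the surviving boost-weight~$0$ components of $\nabla C$ all carry the derivative index on $\ell$, and their metric-duals (with derivative index on $n$) vanish for type~$III$, so every full contraction is zero. Establishing this for all polynomial invariants---not just for the two operators $T^\mu{}_\nu$ and $W^{\mu\nu}{}_{\epsilon\eta}$ you quote---is exactly the content of the VSI analysis in \cite{4DVSI}, and your closing remark that this ``reduces to a systematic NP enumeration'' understates the work involved. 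The same issue recurs one order higher for part~(2).
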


To prove the final result below, we shall assume for simplicity that the
spacetime is Einstein, so that $R_{\mu\nu}=\lambda g_{\mu\nu}$. We
therefore only have to consider the Ricci scalar ($=4\lambda$) and
the Weyl invariants. If this is not the case, then we would need
to include the Ricci and mixed invariants. This can be done in a
straight-forward manner.  A summary of these results is given in Figure
\ref{flowchart}.
\begin{prop}
Assume a spacetime is Einstein. Then:
\newline
\indent 3. If it is VSI$^W_{2}$, then it is Kundt.
\end{prop}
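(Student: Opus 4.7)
The plan is to show that under VSI$^W_2$ and the Einstein condition, the spin coefficients $\kappa,\rho,\sigma$ of a preferred null congruence must vanish, placing the metric in the Kundt class. The argument reuses the curvature operators already built in Section 5, together with the observation that for an Einstein metric every polynomial curvature invariant reduces to a polynomial in the cosmological constant $\lambda$ and a pure Weyl invariant.

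First, VSI$^W_2$ implies VSI$^W_0$, so $I=J=0$, which in 4D Lorentzian signature restricts the Weyl tensor to Petrov type $III$, $N$ or $O$. The Einstein condition $R_{\mu\nu}=\lambda g_{\mu\nu}$, combined with the contracted Bianchi identity, forces $\lambda$ to be constant, and hence $\nabla R_{\mu\nu}=0$. Every polynomial invariant at any order therefore splits as a product of a power of $\lambda$ with a pure Weyl invariant, so VSI$^W_k$ really controls all order $k$ polynomial invariants (up to the constant $\lambda$). In the Petrov type $O$ case, $C_{\mu\nu\alpha\beta}=0$ together with Einstein gives constant curvature, so the spacetime is locally Minkowski, de Sitter, or anti-de Sitter, each of which admits a well-known Kundt form.

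For Petrov types $N$ and $III$, the operator ${\sf C}^2$ reduces type $III$ to type $N$ (as noted in Section 5.3), so it suffices to treat type $N$. Consider the operator $T^{\mu}_{~\nu}\equiv \nabla_{\gamma}C^{\alpha\beta\gamma\mu}\nabla^{\delta}C_{\alpha\beta\delta\nu}$ from Section 5.4, whose boost weight 0 piece is proportional to $|\kappa\Psi_4|^2$. Its trace invariants $\mathrm{Tr}(T^k)$ are order 1 Weyl polynomial invariants, and VSI$^W_1$ forces them to vanish; since $\Psi_4\neq 0$ in type $N$, this gives $\kappa=0$. With $\kappa=0$, the tensor $\Box C_{\mu\nu\alpha\beta}$ shares the symmetries of $C$ and has boost weight 0 piece proportional to $\rho\sigma\Psi_4$; if $\rho\sigma\neq 0$ it is Weyl-like of type $II$, so $I(\Box C)$ is a nonzero order 2 invariant, contradicting VSI$^W_2$, and hence $\rho\sigma=0$. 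If just one of $\rho,\sigma$ is nonzero, the operator $W^{\mu\nu}_{~~\epsilon\eta}\equiv C^{\alpha\beta\gamma\delta;(\mu\nu)}C_{\alpha\beta\gamma\delta;(\epsilon\eta)}$ from Section 5.4 has boost weight 0 components proportional to $|\Psi_4|^4|\rho|^8$ or $|\sigma|^8$, again yielding nonzero order 2 invariants. Therefore $\kappa=\rho=\sigma=0$, the preferred null vector $\ell$ satisfies the Kundt conditions, and the metric is Kundt.

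The main technical obstacle is verifying that the schematic boost weight 0 pieces of $T$, $\Box C$ and $W$ really produce polynomial scalars that do not vanish identically: in each case one must pick a specific contraction (such as $\mathrm{Tr}(T^2)$, $I(\Box C)$, or $W^{\mu}_{~\alpha}W^{\alpha}_{~\mu}$) and check that it is a nontrivial polynomial in the relevant spin-coefficient combination, rather than being cancelled by other terms. The Einstein assumption is essential precisely so that there are no competing mixed Ricci--Weyl invariants at each order; relaxing it would require parallel control of Ricci and mixed invariants, which is mentioned in the paper but not carried out in detail here.
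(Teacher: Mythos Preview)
Your argument is essentially the one the paper has in mind: the proposition is stated without an explicit proof, but the paragraph immediately following it (and the earlier discussion) makes clear that the intended reasoning is exactly the chain VSI$^W_0 \Rightarrow$ Petrov type $III$, $N$ or $O$, together with the vanishing of the first- and second-order Weyl invariants forcing $\kappa=\rho=\sigma=0$. The paper phrases this by invoking the specific invariants $\mathcal{I}_1,\mathcal{I}_2$ (type $N$) and $\tilde{\mathcal{I}}_1,\tilde{\mathcal{I}}_2$ (type $III$) from \cite{4DVSI}, whereas you reconstruct them from the operators $T$, $\Box C$ and $W$ of Section~5.4; these are the same calculations packaged differently.

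One point deserves a small sharpening. Your sentence ``${\sf C}^2$ reduces type $III$ to type $N$, so it suffices to treat type $N$'' borrows the device of Section~5.3, but that device is used there to produce \emph{projectors}, not to control polynomial invariants. The invariants you actually need are built from $\nabla C$ and $\nabla^{(2)}C$ of the genuine type-$III$ Weyl tensor, and the boost-weight-$0$ expressions then involve $\Psi_3$ rather than $\Psi_4$. The paper handles this by citing the separate type-$III$ invariants $\tilde{\mathcal{I}}_1,\tilde{\mathcal{I}}_2$ from \cite{4DVSI}; your argument would be complete if you either quote those, or note that the same operators $T$, $\Box C$, $W$ applied to a type-$III$ Weyl tensor yield boost-weight-$0$ pieces proportional to $|\kappa\Psi_3|^2$, $\rho\sigma\Psi_3$, and $|\Psi_3|^4|\rho\text{ or }\sigma|^8$ respectively, so the same chain of vanishings goes through. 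Apart from this, your proof is correct and matches the paper's approach.
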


\begin{figure}[t]
\scalebox{1.6}{\includegraphics[height=80mm,width=60mm]{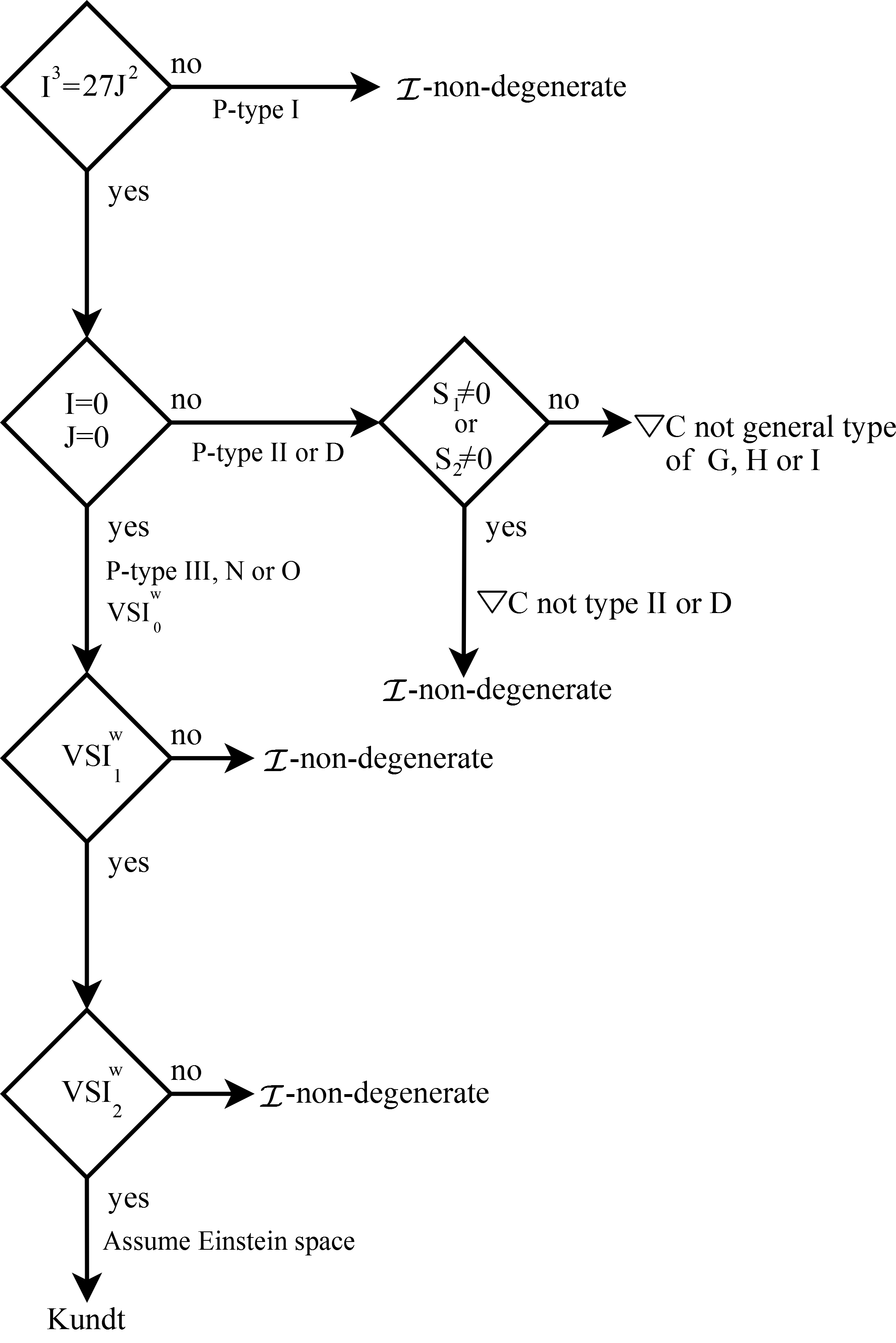}}
\caption{Using invariants in terms of the Weyl tensor and its
covariant derivatives to determine whether the spacetime is
$\mathcal{I}$-non-degenerate.}\label{flowchart}
\end{figure}

%{\bf {[commented out flowchart; below is new]}}

From the above results we have conditions on the scalar invariants
(in terms of the Weyl tensor and its covariant derivatives) to
determine whether the spacetime is $\mathcal{I}$-non-degenerate.
Consequently, we have a number of conditions in terms of scalar
invariants that can be used to determine when a spacetime is not
$\mathcal{I}$-non-degenerate and hence an aligned algebraically
special type-$II$  (or {\it degenerate}) Kundt spacetime.

Let us further consider to what extent the class of degenerate
Kundt spacetimes can be characterized by their scalar curvature
invariants. Clearly such spacetimes are algebraically special and
of type $II$ (or more special) and hence  $27J^2= I^3$. If $I=J=0$,
then if the spacetime is of Weyl type $N$, then
$\mathcal{I}_1=\mathcal{I}_2=0$ if and only if
$\kappa=\rho=\sigma=0$ from the results in \cite{4DVSI} (the
definitions of the invariants $\mathcal{I}_1$ and $ \mathcal{I}_2$
are given therein). Similar results follow for Weyl type $III$
spacetimes (in terms of the invariants $\tilde{\mathcal{I}}_1$ and
$\tilde{\mathcal{I}}_2$) and in the conformally flat (but
non-vacuum) case (in terms of similar invariants $\mathcal{I}_1$
and $\mathcal{I}_2$ constructed from the Ricci tensor)
\cite{4DVSI}. If $27J^2= I^3\neq 0$ (Weyl types $II$ and $D$):
essentially if $\kappa=\rho=\sigma \neq 0$, we can construct
positive boost weight terms in the derivatives of the curvature
and determine an appropriate set of scalar curvature invariants.
For example, consider the positive boost weight terms of the first
covariant derivative of the Riemann tensor, $\nabla(Riem)$. If the
spacetime is $\mathcal{I}$-non-degenerate, then each component of
$\nabla(Riem)$ is related to a scalar curvature invariant. In this
case, in principle  we can solve (for the positive boost weight
components of $\nabla(Riem)$) to uniquely determine
$\kappa,\rho,\sigma$ in terms of scalar invariants, and we can
therefore find necessary conditions for the spacetime to be
degenerate Kundt (there are two cases to consider, corresponding
to whether $\Psi_2 + \frac{2}{3} \Phi_{11}$ is zero or non-zero).
We note that even if the invariants exist in principle, it may not
be possible to construct them in practice.

\section{Weakly and Strongly $\mathcal{I}$-non-degenerate}
Until now we have only considered $\mathcal{I}$-non-degeneracy in
terms of a local deformation of the metric. It is also of interest
to know whether a $\mathcal{I}$-non-degenerate metric is unique
under a discrete transformation. We shall call a spacetime such
that the set of invariants  \emph{uniquely} specifies the metric
\emph{strongly $\mathcal{I}$-non-degenerate}. Similarly, we shall call a
spacetime such that the set of invariants only defines a
unique metric up to discrete transformations \emph{weakly
$\mathcal{I}$-non-degenerate}.

Let us revisit the examples given by eqs. (\ref{weakly1}) and
(\ref{weakly2}) in the Introduction. These two examples are both of
Weyl type $O$, but they are of Segre type $\{1,(111)\}$ and
$\{(1,11)1\}$. Hence, the eigenvalues of the Ricci operator is the
same but we cannot, from the invariants alone, determine which
eigenvalue is associated with the timelike direction and which is
associated with the spacelike direction. This is linked to the
fact that the map where we swap time with a space direction is not
a Lorentz transformation. Note that permuting  any two
axes in the Riemannian-signature case is an $O(n)$ transformation,
while permuting time and space in the Lorentzian case is
\emph{not} an $O(1,n-1)$ transformation. Therefore, there is no
distinction between weakly and strongly
$\mathcal{I}$-non-degenerate in the Riemannian case.

In most cases we do actually have a frame in which  we know
which direction is time. However, if we are only handed a set of
invariants we would not have such  a frame and, a priori, we would not
know which eigenvalue is associated with time. We also note that
the ambiguity in choosing a projection operator in certain cases
is linked to the same problem; we do not necessarily  know which
eigenvalue is associated with time.

Therefore, the question of which $\mathcal{I}$-non-degenerate
metrics are strongly $\mathcal{I}$-non-degenerate is linked to the
question of when the time direction can be uniquely specified from
the set of invariants.

Consider an invariant $I$. Then we can consider the gradient,
$v_{\mu}\equiv\nabla_{\mu}I$, which is a curvature ``vector''.
Assume that the metric is $\mathcal{I}$-non-degenerate, in which
case we always have a timelike projection operator, $\bot_1$.
Therefore, we can consider $\bot_1{\bf v}$. Now, if
$\bot_1{\bf v}\neq 0$ then clearly it is timelike and the invariant
$(\bot_1{\bf v})^\mu (\bot_1{\bf v})_\mu <0$. Therefore, we could
uniquely specify time, because $\bot_1{\bf v}$ would give us the
time direction. So if there exists an invariant $I$ for which
$({\bot_1})^{\mu}_{~\nu}\nabla^\nu I$ is timelike (and non-zero),
this spacetime is strongly $\mathcal{I}$-non-degenerate.

A similar conclusion is reached if we have three spacelike
projection operators and all of these have similar non-zero
gradients.  To be more precise:
\begin{prop}
Consider a (weakly) $\mathcal{I}$-non-degenerate spacetime. Then, if either:
\begin{enumerate}
\item{} there exists an invariant $I=v_{\mu}v^{\mu}$, where
$v_{\mu}$ is a curvature 1-tensor, such that $I<0$; or, \item{}
there exist curvature 1-tensors $v_{\mu}$, $u_{\mu}$ and $w_{\mu}$
such that the invariants $I_1=v_{\mu}v^{\mu}>0$,
$I_2=u_{\mu}u^{\mu}>0$, $I_3=w_{\mu}w^{\mu}>0$, and
\[ I_4=v_{\alpha}u_{\beta}w_{\gamma}g^{[\alpha}_{~\mu}g^{\beta}_{~\nu}g^{\gamma]}_{~\lambda}v^{\mu}u^{\nu}w^{\lambda}\neq 0;
\]
\end{enumerate}
then the spacetime is \emph{strongly} $\mathcal{I}$-non-degenerate.
\end{prop}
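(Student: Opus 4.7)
The plan is to show that in each case one can invariantly extract a timelike direction from the curvature data, thereby eliminating the discrete ambiguity that separates weak from strong $\mathcal{I}$-non-degeneracy. The paragraph immediately preceding the proposition already sketches the main idea: a curvature 1-tensor with negative-norm invariant fixes the timelike leg of a canonical frame, and once the time direction is canonically singled out, any discrete transformation preserving all invariants must preserve it.

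For Case 1 the argument is direct. From $v_{\mu}$ with $v\cdot v = I<0$ I build the symmetric idempotent $(\bot_{v})^{\mu}{}_{\nu}\equiv v^{\mu}v_{\nu}/I$; for any $w$ one checks $w_{\mu}(\bot_{v})^{\mu}{}_{\nu}w^{\nu}=(w\cdot v)^{2}/I\leq 0$, so $\bot_{v}$ is a timelike curvature projector onto the direction of $v$. Adjoining $\bot_{v}$ to the set of curvature projectors supplied by weak $\mathcal{I}$-non-degeneracy fixes which leg of the canonical frame is timelike, and the argument of the paragraph preceding the proposition then yields strong $\mathcal{I}$-non-degeneracy.

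For Case 2 I would first observe that a short expansion of the antisymmetriser gives $I_{4}=\det(G)/6$, where $G$ is the Gram matrix of $v,u,w$. Its six independent entries (the diagonal entries $I_{1},I_{2},I_{3}$ together with the off-diagonal inner products $v\cdot u$, $v\cdot w$, $u\cdot w$) are themselves invariants, and the hypothesis $I_{4}\neq 0$ is then equivalent to linear independence of $v,u,w$, which therefore span a 3-dimensional subspace $V\subset T_{p}\mathcal{M}$. Using the inverse $G^{-1}$ I form the curvature projector $(\bot_{V})^{\mu}{}_{\nu}=\sum_{i,j}(G^{-1})_{ij}v_{(i)}^{\mu}v_{(j)\nu}$ onto $V$, and its complement projects onto the 1-dimensional $V^{\perp}$. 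If $\det G>0$ then $V$ is spacelike and $V^{\perp}$ is timelike, so the complement is a timelike curvature projector and Case 1 applies. If $\det G<0$ then $V$ is Lorentzian; since the diagonal entries $I_{1},I_{2},I_{3}$ of $G$ are positive, $G$ has exactly one negative eigenvalue, and the associated eigenvector $(a,b,c)$, whose components are rational functions of the invariants entering $G$, yields a timelike curvature 1-tensor $av+bu+cw$, again reducing to Case 1.

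The main obstacle I anticipate is not the construction of the timelike projector but formalising the bridge from ``a timelike curvature 1-tensor exists'' to ``the metric is uniquely fixed by its invariants''. The intuition is given in the paragraph preceding the proposition; elevating it to a rigorous statement requires a short lemma saying that any discrete transformation preserving the complete set of invariants of a weakly $\mathcal{I}$-non-degenerate metric must map curvature tensors to curvature tensors while preserving their invariant norms, and hence cannot exchange the direction singled out by $\bot_{v}$ with any spacelike direction of the canonical frame. Once this lemma is in place, both cases close by direct application.
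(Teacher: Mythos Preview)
Your Case~1 is exactly what the paper does: build a timelike projector from $v^{\mu}v_{\nu}/I$ and invoke the paragraph preceding the proposition.

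In Case~2 you are actually \emph{more} careful than the paper. The paper's proof simply says that the three spacelike curvature 1-tensors give three spacelike projection operators which, by $I_4\neq 0$, are linearly independent, and then asserts ``the timelike vector is orthogonal to these three''. Your observation that three individually spacelike vectors may span a Lorentzian 3-plane (so that the orthogonal complement is spacelike, not timelike) is correct, and the paper's one-line argument tacitly assumes this does not happen. Your split into $\det G>0$ versus $\det G<0$, together with the extraction of a timelike combination via the negative eigenvector of $G$ in the latter case, cleanly repairs this. The Gram-matrix identification $I_4=\det(G)/6$ is also a nice explicit check of the linear-independence claim that the paper leaves implicit.

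Your worry in the final paragraph about the ``bridge'' from a timelike curvature projector to strong $\mathcal{I}$-non-degeneracy is entirely in line with the paper: its proof literally ends with ``the result follows'', leaning on the informal discussion just before the proposition. So you are not missing anything the paper supplies; both accounts defer this step to the surrounding text.
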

\begin{proof}
In case (1) we can construct a timelike projection operator, and the
result follows. In case (2) there exist three spacelike projection
operators, and the condition that $I_4\neq 0$ ensures that these are linearly
independent. Hence, the timelike vector is orthogonal to these
three and the result follows.
\end{proof}
Therefore, the only spacetimes that are weakly
$\mathcal{I}$-non-degenerate but not strongly
$\mathcal{I}$-non-degenerate must have a timelike and a
spacelike derivative which annihilate all invariants. If the
spacetime is weakly $\mathcal{I}$-non-degenerate, but not strongly
$\mathcal{I}$-non-degenerate, there must consequently exist a
timelike vector, ${\mbold\xi}_1$, and a spacelike vector,
${\mbold\xi}_2$, for which
\[{\mbold\xi}_{1}(I)={\mbold\xi}_2(I)=0,\]
for all scalar invariants $I$. If
$[{\mbold\xi}_1,{\mbold\xi}_2]={\mbold\xi}_3$, it also follows that
${\mbold\xi}_3(I)=0$. Therefore, there will be a set of
vectors,  $\{ {\mbold\xi}_i\}$,  closed under commutation
(consequently, the Jacobi identity will also be satisfied), which
annihilates all curvature invariants. This has several
consequences. First, this set will span a timelike (sub)manifold
of dimension 2, 3 or 4. We can therefore locally introduce normal
coordinates, so that the invariants only depend on the normal
coordinates; i.e., $I=I(x,y)$ (dim 2), $I=I(x)$ (dim 3) or
$I=\text{constant}$  (dim 4, and the spacetime is a CSI
spacetime). Second, by the assumption that this spacetime is
weakly $\mathcal{I}$-non-degenerate, and the fact that these
invariants only depend on the coordinates $(x,y)$, there
exists an orthonormal frame such that \emph{all components of the
curvature tensors only depend on the normal coordinates $(x,y)$}
\cite{KN,kramer}.

This indicates that these vectors that annihilate all invariants
have a special geometric meaning. First, let us consider an
arbitrary curvature tensor of rank $(n,n)$,
${\mathcal{R}^{\alpha_1\dots\alpha_n}}_{\beta_1\dots\beta_n}$,  being a sum,
tensor products and contractions of the Riemann tensor and its
covariant derivatives. Since this tensor has as many covariant as
contravariant indices, we can interpret this as a curvature
operator, ${\sf R}\equiv{\mathcal{R}^{\alpha_1\dots\alpha_n}}_{\beta_1\dots\beta_n}{\bf e}_{\alpha_1}\otimes\cdots\otimes{\bf e}_{\alpha_n}\otimes{\mbold\omega}^{\beta_1}\otimes\cdots\otimes{\mbold\omega}^{\beta_n}$,  mapping rank $n$ contravariant tensors into rank $n$
contravariant tensors. Let us denote ${\bf T}^{End}$ as the tensor
algebra of \emph{all} such curvature operators. It is clear that
all polynomial curvature invariants can be considered as complete
contractions of operators in  ${\bf T}^{End}$.
\begin{thm}\label{Liexi}
Consider a spacetime which is (weakly) $\mathcal{I}$-non-degenerate,
and a vector field ${\mbold\xi}$. Then the following conditions are
locally equivalent:
\begin{enumerate}
\item{} ${\mbold\xi}(I)=0$ for all curvature invariants $I$.
\item{} The Lie derivative of any curvature operator ${\sf R}\in{\bf T}^{End}$ with respect to ${\mbold\xi}$, vanishes; i.e.,
\[\pounds_{{\mbold\xi}}{\sf R}=0.\]
\end{enumerate}
\end{thm}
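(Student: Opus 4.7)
The plan is to prove the two directions separately, with $(2) \Rightarrow (1)$ being essentially immediate and $(1) \Rightarrow (2)$ proceeding via the intermediate step that $\xi$ is a Killing vector.

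For $(2) \Rightarrow (1)$: any scalar polynomial curvature invariant $I$ is a complete contraction of operators ${\sf R} \in {\bf T}^{End}$ with the metric and its inverse. The Lie derivative is a derivation that commutes with contractions, so once $\pounds_{{\mbold\xi}}{\sf R}=0$ holds for every curvature operator (and, in particular, for the identity operator $g^\alpha_{\ \beta}$, whose invariance is automatic), the identity $\xi(I) = \pounds_{{\mbold\xi}} I = 0$ follows.

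For $(1) \Rightarrow (2)$, the first and main step is to show that $\xi$ is Killing. By the discussion preceding the theorem, the vectors annihilating every invariant form a Lie subalgebra of vector fields, so we may introduce adapted local coordinates in which all scalar invariants depend only on the transverse coordinates $(x^A)$ and in which $\xi$ lies entirely in the longitudinal directions. Since the spacetime is weakly $\mathcal{I}$-non-degenerate, the canonical frame $\{{\bf e}_a\}$ built from the curvature projectors of Sections 3--5 is determined, up to the weak discrete ambiguity, by the invariants themselves; in this frame, every component of the Riemann tensor and of $\nabla^{(k)}\text{Riem}$ is likewise an expression in the invariants and therefore depends only on the transverse coordinates. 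The classical completeness of the Cartan--Karlhede characterization on neighbourhoods of constant algebraic type \cite{KN,kramer}, together with the assumption that the algebraic types do not change up to the required order, then implies that the longitudinal directions are infinitesimal isometries. In particular $\pounds_{{\mbold\xi}} g = 0$.

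Once $\xi$ is Killing, the remainder is standard tensor calculus: $\pounds_{{\mbold\xi}}$ then commutes with the Levi-Civita connection on tensors, $\pounds_{{\mbold\xi}}\text{Riem}=0$, and therefore $\pounds_{{\mbold\xi}}\nabla^{(k)}\text{Riem}=0$ for all $k\geq 0$. Since every element of ${\bf T}^{End}$ is a polynomial combination, via tensor products and contractions, of $g$, $g^{-1}$, the Riemann tensor and its covariant derivatives, one obtains $\pounds_{{\mbold\xi}}{\sf R}=0$ for all ${\sf R}\in{\bf T}^{End}$. The hard part is the intermediate Killing step, which requires translating ``all frame components of curvature depend only on $(x^A)$'' into an actual isometry; the subtlety is that the canonical frame is not a priori a coordinate frame, and one must argue that it is Lie-transported along $\xi$ up to at worst an element of the trivial isotropy group left after weak $\mathcal{I}$-non-degeneracy is invoked.
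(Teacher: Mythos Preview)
Your argument for $(2)\Rightarrow(1)$ is correct and matches the paper's: it follows immediately from $\xi(I)=\pounds_\xi I$ and the derivation property of the Lie derivative.

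For $(1)\Rightarrow(2)$, however, your route diverges substantially from the paper's, and the divergence matters. You aim to prove first that $\xi$ is Killing and then deduce $\pounds_\xi{\sf R}=0$ from standard facts about Killing vectors. The paper does \emph{not} pass through the Killing property at all. Instead it argues directly: let $\phi_t$ be the flow of $\xi$ and set $\hat{\sf R}_t=\phi_t({\sf R})$. Since $\xi(I)=0$, the invariants of $\hat{\sf R}_t$ at $p$ agree with those of ${\sf R}$ at $p$. Eigenvectors of ${\sf R}$ are carried by $\phi_t$ to eigenvectors of $\hat{\sf R}_t$ with the same eigenvalues, so each eigenspace is $\phi_t$-invariant; writing a projection operator as $\bot=\delta^A_{~B}\,{\bf e}_A\otimes{\mbold\omega}^B$ over an eigenspace, the basis change within the eigenspace cancels and one gets $\bot=\hat\bot_t$. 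Then, because (weak) $\mathcal{I}$-non-degeneracy means every ${\sf R}\in{\bf T}^{End}$ can be expanded in the projection operators and the scalar invariants, it follows that ${\sf R}=\hat{\sf R}_t$, i.e.\ $\pounds_\xi{\sf R}=0$. This argument never needs $\pounds_\xi g=0$.

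In fact, in the paper the Killing statement is a \emph{corollary} of this theorem, not a lemma toward it: once $\pounds_\xi{\sf R}=0$ is established for all curvature operators, one invokes the equivalence principle (Cartan--Karlhede) to conclude that $\phi_t$ is an isometry and hence that there exists a Killing field $\hat{\mbold\xi}$ with $\hat\phi_{\hat t}(p)=\phi_t(p)$. The paper explicitly warns that $\hat{\mbold\xi}$ need not equal ${\mbold\xi}$ in general (flat space being the extreme example). So the step you flag as ``hard'' --- promoting ``Cartan scalars constant along $\xi$'' to ``$\xi$ itself is Killing'' --- is precisely the place where your argument is incomplete, and the paper's approach sidesteps it entirely by never asking for it. Your route inverts the paper's logical order and, as written, inherits a genuine gap at exactly the point you identify; the paper's projection-operator argument is both shorter and avoids the subtlety.
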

\begin{proof}
\emph{(1) $\Rightarrow$ (2):} Assume that ${\mbold\xi}(I)=0$ for all curvature invariants $I$. Consider the 1-parameter group of diffeomorphisms, $\phi_t$, associated with the vector field ${\mbold\xi}$. Then
\[ {\mbold\xi}(I)=\pounds_{\mbold\xi}(I)=\lim_{t\rightarrow 0}\frac 1t[I-\phi_t^*(I)]=0.\]
Assuming the conditions hold over a neighborhood $U$, this can be
integrated and we get, at a point $p\in U$, $I(p)=I(\phi_{-t}(p))$.
Hence, along the integral curves the value of the invariants do not
change. Consider now the Lie derivative of an arbitrary curvature
operator ${\sf R}={\mathcal{R}^{\alpha_1\dots\alpha_n}}_{\beta_1\dots\beta_n}{\bf e}_{\alpha_1}\otimes\cdots\otimes{\bf e}_{\alpha_n}\otimes{\mbold\omega}^{\beta_1}\otimes\cdots\otimes{\mbold\omega}^{\beta_n}$
(e.g., see \cite{KN}):
\[ \pounds_{{\mbold\xi}}{\sf R}=\lim_{t\rightarrow 0}\frac 1t\left[{\sf R}-\hat{{\sf R}}_t\right], \]
where $\hat{{\sf R}}_t$ is the $\phi_t$-transformed tensor defined by:
\[ \hat{{\sf R}}_t=\phi_t({\sf R})=\phi_t^*({\mathcal{R}^{\alpha_1\dots\alpha_n}}_{\beta_1\dots\beta_n})(\phi_{t*}{\bf e}_{\alpha_1})\otimes\cdots\otimes(\phi_{t*}{\bf e}_{\alpha_n})\otimes(\phi_t^*{\mbold\omega}^{\beta_1})\otimes\cdots\otimes(\phi_t^*{\mbold\omega}^{\beta_n}).
\]
The action of $\phi_t$ preserves the form and symmetries of a tensor.
Thus the transformed tensor $\hat{{\sf R}}_t$ will be a curvature tensor
of the same kind as ${\sf R}$. The curvature invariants at $p$ will be
$I(p)$ for ${\sf R}$ and $I(\phi_{-t}(p))$ for $\hat{{\sf R}}_t$.
From the above, these invariants are the same and, from the assumption
of $\mathcal{I}$-non-degeneracy, the invariants characterise
the spacetime, which means that there exists a frame such that the components of the curvature tensors do not change along $\phi_{-t}(p)$. This frame essentially is the eigenvalue frame of the curvature tensors. In particular, the projection operators define this frame.
%First, consider the unit operator acting on tangent space:
%\[ \pounds_{{\mbold\xi}}\delta^{\mu}_{~\nu}=0,\]
%by virtue of the definition of the Lie derivative. This again implies, by integration, that $\delta^{\mu}_{~\nu}$ is $\phi_t$-invariant. Furthermore, any unit operator ${\sf 1}$ of rank $(n,n)$, must therefore also be  $\phi_t$-invariant:
%\[ \phi_t({\sf 1})={\sf 1}.\]

If ${\sf v}$ is an eigenvector of ${\sf R}$, then
\[{\sf R}{\sf v}-\lambda{\sf v}=0, \quad \Rightarrow \quad  \hat{\sf R}_t\hat{\sf v}_t-\lambda\hat{\sf v}_t=0,\]
where hatted quantities are transformed under $\phi_t$. Eigenvectors
are therefore transformed onto eigenvectors of $\hat{{\sf R}}_t$.
Using the fact that there exists a frame so that
$\phi_t^*({\mathcal{R}^{\alpha_1\dots\alpha_n}}_{\beta_1\dots\beta_n})={\mathcal{R}^{\alpha_1\dots\alpha_n}}_{\beta_1\dots\beta_n}$,
means that the components remain the same in this frame.
For a symmetric operator the eigenvectors are orthogonal and we
can introduce a basis of orthonormal eigenvectors $\{ {\bf e}_I\}$
with duals $\{ {\mbold\omega}^I\}$. Consider now a symmetric
projection operator, ${\bot}$, written in the eigenvector basis:
\[ {\bot}=\delta^A_{~B}{\bf e}_A\otimes{\mbold\omega}^B, \quad \hat{\bot}_t=\delta^A_{~B}\hat{\bf e}_A\otimes\hat{\mbold\omega}^B,\]
where the indices run over a subset of eigenvectors with the same
eigenvalue, and the hatted basis is the transformed basis. From
the above discussion we see that the eigenspaces are
$\phi_t$-invariant, and hence there is a transformation matrix
$M^A_{~B}$ such that $\hat{\bf e}_{A}=M^{\tilde{A}}_{~A}{\bf
e}_{\tilde{A}}$, and
$\hat{\mbold\omega}^B=(M^{-1})^{B}_{~\tilde{B}}{\mbold\omega}^{\tilde{B}}$.
Consequently,
\[
{\bot}=\hat{\bot}_t,
\]
and the curvature projection operators are $\phi_t$-invariant.
Therefore, since all ${\sf R}\in {\bf T}^{End}$ can be expanded in
terms of these projection operators and the curvature invariants
(since it is $\mathcal{I}$-non-degenerate), we have that ${\sf
R}=\hat{\sf R}_t$ and (2) follows.

\emph{(2) $\Rightarrow$ (1):} This follows trivially from the
observation that $ {\mbold\xi}(I)=\pounds_{\mbold\xi}(I)$ and the
properties of the Lie derivative.
\end{proof}
\begin{cor}
If there exists a non-zero vector field,  ${\mbold\xi}$, fulfilling
\[\pounds_{{\mbold\xi}}{\sf R}=0,\]
for all curvature operators ${\sf R}\in {\bf T}^{End}$,
 then the spacetime possesses a Killing vector field, $\hat{\mbold\xi}$.
\end{cor}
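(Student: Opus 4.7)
The plan is to use Theorem~\ref{Liexi} to recast the hypothesis as $\mbold\xi(I)=0$ for every scalar curvature invariant $I$, and then to appeal to the local-homogeneity argument already outlined in the discussion preceding that theorem in order to produce a Killing vector field along the $\mbold\xi$-direction.

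First I would invoke Theorem~\ref{Liexi}: the condition $\pounds_{\mbold\xi}{\sf R}=0$ for every ${\sf R}\in{\bf T}^{End}$ is locally equivalent to $\mbold\xi(I)=0$ for every scalar polynomial invariant $I$, so that along the flow $\phi_t$ of $\mbold\xi$ every invariant is constant. Moreover, the proof of Theorem~\ref{Liexi} establishes that each curvature projection operator $\bot_A$ is $\phi_t$-invariant and that, in the common eigenframe of the $\bot_A$'s, the components of every curvature tensor are constant along $\phi_t$.

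Next I would appeal to the frame-constancy observation made in the text just before Theorem~\ref{Liexi}: in a (weakly) $\mathcal{I}$-non-degenerate spacetime in which $\mbold\xi$ annihilates every invariant, there exists locally an orthonormal frame in which every component of every curvature tensor and its covariant derivatives depends only on the coordinates transverse to $\mbold\xi$ \cite{KN,kramer}. From this data the classical local-homogeneity argument of Singer, in the form used by Kobayashi--Nomizu \cite{KN,kramer}, produces a one-parameter local group of isometries along the $\mbold\xi$-direction; its generator $\hat{\mbold\xi}$ is the required Killing vector field, nonzero because $\mbold\xi$ itself is nonzero.

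The principal obstacle I foresee is the step from frame-constancy of the full curvature jet to the existence of an honest local isometry group. This requires showing both that the residual Lorentz rotation relating $\phi_{t*}{\bf e}_A|_p$ to ${\bf e}_A|_{\phi_t(p)}$ can be absorbed into a genuine diffeomorphism of $\mathcal{M}$, and that the resulting flow preserves the metric and not merely the curvature operators as tensors in ${\bf T}^{End}$; both points are the content of the standard Singer / Kobayashi--Nomizu argument once an adapted orthonormal frame has been constructed, and they are what distinguishes the present corollary, which requires the full strength of condition (2) of Theorem~\ref{Liexi}, from the weaker statement that merely $\phi_t$-invariant invariants exist.
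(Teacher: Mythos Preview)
Your approach is close in spirit to the paper's but takes an unnecessary detour and omits one step the paper handles explicitly.  The paper does \emph{not} pass through condition~(1) of Theorem~\ref{Liexi}; it works directly from the hypothesis $\pounds_{\mbold\xi}{\sf R}=0$.  For each even-ranked curvature tensor ${\bf R}$ it forms the associated operator ${\sf R}\in{\bf T}^{End}$; since $\pounds_{\mbold\xi}{\sf R}=0$, the operator has identical components at $p$ and $q=\phi_t(p)$ in some frame, and after raising/lowering indices the same holds for ${\bf R}$, so the Cartan scalars of ${\bf R}$ agree at $p$ and $q$.  For odd-ranked ${\bf R}$ the paper uses the trick of passing to ${\bf R}\otimes{\bf R}$ (even rank) together with continuity of $\phi_t$ in $t$---a point your outline does not address.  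With all Cartan scalars matched at $p$ and $q$, the paper invokes the \emph{equivalence principle} (Cartan--Karlhede, \cite{kramer}) to conclude that for each $t$ there is a local isometry taking $p$ to $\phi_t(p)$, and hence a Killing field $\hat{\mbold\xi}$.  Your ``principal obstacle'' is precisely what the equivalence principle absorbs, and the Singer/Kobayashi--Nomizu argument you cite is the same mechanism under another name.  One further difference: the paper's argument does not explicitly invoke $\mathcal{I}$-non-degeneracy, whereas your route through Theorem~\ref{Liexi} and the discussion preceding it builds that assumption in; the corollary as stated does not carry it (and the remark after the proof applies it to flat space, which is not $\mathcal{I}$-non-degenerate).
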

\begin{proof} This follows from the equivalence principle \cite{kramer}.
The Cartan scalars are related to the components of the Riemann tensor and
its derivatives, and along the integral curves of ${\mbold\xi}$ we can use
$\phi_{t}$ at any given point $p$.  We want to compare the tensors at $p$ and $q\equiv\phi_{t}(p)$. Consider an arbitrary even-ranked curvature tensor ${\bf R}$. By raising or
lowering indices appropriately, we get an operator ${\sf R}$. Since the Lie derivative of ${\sf R}$ along ${\mbold\xi}$ vanishes, there is a frame such that ${\sf R}_q$ and ${\sf R}_p$ has identical components. Therefore, by raising and lowering the indices appropriately, the components of ${\bf R}_q$ and ${\bf R}_p$ are also the same. The Cartan invariants of  ${\bf R}_q$ and ${\bf R}_p$ are therefore the same. For a curvature tensor, ${\bf R}$,
of odd rank we consider ${\bf R}\otimes{\bf R}$, which is of even
rank, and use the
fact that $\phi_{t}$ is continuous in $t$.  Therefore, there exists a frame such that all the components of any curvature tensor are identical at $p$ and $q$. The equivalence principle now implies that $\phi_{t}$, for
any given $t$, is an isometry; hence, there must exist a Killing vector
field $\hat{\mbold\xi}$ which generates an isometry $\hat{\phi}_{\hat{t}}$
such that $\hat{\phi}_{\hat{t}}(p)=\phi_{t}(p)$.
\end{proof}

Note that in most cases $\hat{\mbold\xi}$ and ${\mbold\xi}$ are the same.
However, in some very special cases with additional symmetries they need
not be (although locally they are of the same causality; e.g., they are both
timelike or both spacelike).  For example, for flat
space the curvature vanishes identically; hence,
$\pounds_{{\mbold\xi}}{\bf R}=0$ for all ${\mbold\xi}$ and any curvature
tensor ${\bf R}$, although not all ${\mbold\xi}$ are Killing vectors.
However, in these special cases there will always exist
at least two Killing vectors.

Therefore, to conclude:
\begin{cor} If a spacetime is \emph{weakly}
$\mathcal{I}$-non-degenerate but not \emph{strongly}
$\mathcal{I}$-non-degenerate, then it possesses locally (at least) one
timelike Killing vector and one spacelike Killing vector.
\end{cor}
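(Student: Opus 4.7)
The plan is to simply glue together the machinery already set up in this section. Because the spacetime is weakly but not strongly $\mathcal{I}$-non-degenerate, the discussion following Proposition~7.5 (the contrapositive of conditions (1) and (2), together with the closure-under-commutator argument) produces a timelike vector field ${\mbold\xi}_1$ and a spacelike vector field ${\mbold\xi}_2$ that annihilate every scalar polynomial curvature invariant. So Step~1 is to quote that existence statement: there exist ${\mbold\xi}_1$ (timelike) and ${\mbold\xi}_2$ (spacelike) with ${\mbold\xi}_1(I)={\mbold\xi}_2(I)=0$ for all $I\in\mathcal{I}$.

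Step~2 is to feed each of these vectors into Theorem~\ref{Liexi}. Since the spacetime is assumed weakly $\mathcal{I}$-non-degenerate, the theorem's hypothesis is met, and condition~(1) for ${\mbold\xi}_i$ implies condition~(2): $\pounds_{{\mbold\xi}_i}{\sf R}=0$ for every curvature operator ${\sf R}\in{\bf T}^{End}$. This is the crucial bridge from ``invariant-killing'' vectors to vectors that actually preserve the full curvature structure.

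Step~3 is then an immediate application of the Corollary that follows Theorem~\ref{Liexi}: any non-zero vector whose Lie derivative annihilates every curvature operator is, up to the caveat noted in the paper, a genuine Killing vector $\hat{{\mbold\xi}}_i$. Applied to ${\mbold\xi}_1$ and ${\mbold\xi}_2$ separately, this yields Killing vectors $\hat{{\mbold\xi}}_1$ and $\hat{{\mbold\xi}}_2$. Invoking the remark that accompanies that corollary, $\hat{{\mbold\xi}}_i$ is locally of the same causal character as ${\mbold\xi}_i$, so $\hat{{\mbold\xi}}_1$ is timelike and $\hat{{\mbold\xi}}_2$ is spacelike, which is exactly the content of the corollary.

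The only genuinely delicate point (and hence the main obstacle) is ensuring that the Killing vectors delivered in Step~3 really inherit the causality of the original invariant-annihilating vectors. In the special degenerate situations mentioned after the previous corollary (e.g.\ extra symmetries, as in flat space), the passage from ${\mbold\xi}_i$ to $\hat{{\mbold\xi}}_i$ is not one-to-one; one must therefore check that the construction still produces at least two independent Killing fields of the required causal types. This follows because ${\mbold\xi}_1$ and ${\mbold\xi}_2$ can be chosen linearly independent (they have different causal character), and the argument in the preceding corollary shows that each integral curve of ${\mbold\xi}_i$ coincides locally with an integral curve of some Killing field of the same causality; picking these two Killing fields completes the proof.
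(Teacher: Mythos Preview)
Your proposal is correct and follows essentially the same approach as the paper. In fact, the paper does not supply a separate proof for this final corollary at all; it simply writes ``Therefore, to conclude:'' and states the corollary, relying on exactly the chain you describe: the discussion after the Proposition (yielding the timelike ${\mbold\xi}_1$ and spacelike ${\mbold\xi}_2$ annihilating all invariants), then Theorem~\ref{Liexi}, then the preceding Corollary to produce Killing vectors $\hat{\mbold\xi}_i$, and finally the remark that $\hat{\mbold\xi}_i$ is locally of the same causality as ${\mbold\xi}_i$ (and that in the special cases with extra symmetry there are in any event at least two Killing vectors). Your write-up makes this implicit chain explicit, which is fine.
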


\section{Conclusions}

In this paper we have addressed the question of what is the class
of Lorentzian manifolds that can be completely characterized by
the scalar polynomial invariants constructed from the Riemann
tensor and its covariant derivatives.
In the Riemannian case the manifold is always locally
characterized by the scalar polynomial invariants and, therefore,
all of the Cartan invariants are related to the scalar
curvature invariants \cite{kramer}. We have generalized these results
to the Lorentzian case.

We have introduced the
important notion of $\mathcal{I}$-non-degenerate spacetime
metrics. In order to prove the main theorems, which is done on a
case-by-case (depending on the algebraic type) using a
boost weight decomposition, we have introduced an appropriate set
of curvature operators and curvature projectors.
In the (algebraically) general case we have shown that if a 4D
spacetime is either Ricci type $I$, Weyl type $I$ or Riemann type $I/G$,
then it is $\mathcal{I}$-non-degenerate, which implies that the
spacetime metric is determined by its curvature invariants (at
least locally, in the sense explained above).

For the algebraically special cases the Riemann tensor itself does
not give enough information to provide us with all the required
projection operators, and it is also necessary to consider the
covariant derivatives. In terms of the boost weight decomposition,
for an algebraically special metric (which has a Riemann tensor
with zero positive boost weight components) which is not Kundt, by
taking covariant derivatives of the Riemann tensor positive boost
weight components are acquired and a set of higher derivative
projection operators are obtained. Consequently, we found that if
the spacetime metric is algebraically special, but $\nabla R$,
$\nabla^{(2)} R$, $\nabla^{(3)} R$ or $\nabla^{(4)} R$ is of type I or more
general, the metric is $\mathcal{I}$-non-degenerate.

The remaining metrics which \emph{do not} acquire a positive
boost weight component when taking covariant derivatives have a
very special curvature  structure. Indeed, in our main theorem we
proved that a spacetime metric is either
$\mathcal{I}$-non-degenerate or the metric is a Kundt metric. This
is very striking result because it implies that a metric that is
not determined by its scalar curvature invariants must be of Kundt
form. The Kundt metrics which are not $\mathcal{I}$-non-degenerate therefore correspond to degenerate
metrics in the sense that many such metrics can have identical
scalar invariants.
This exceptional property of the
the degenerate Kundt metrics essentially follows from the fact that
they do not define a unique timelike curvature operator.

The results in the case of Petrov type I spacetimes in 4D follow
from the above theorems. Although these results were not
previously known, some partial results for 4D Weyl (Petrov) type
I spacetimes, which are consistent with the above analysis, can be
deduced from previous work. This is discussed in the next section
(also see Appendix D).

Therefore, if a spacetime is $\mathcal{I}$-non-degenerate and the
algebraic type is explicitly known (using, for example,  the
Pleba\'{n}ski notion for the Segre type in which commas are used to
distinguish between timelike and spacelike eigenvectors and their
associated eigenvalues, as is common in general relativity), the
spacetime can be completely classified in terms of its scalar
curvature invariants.

There are a number of important consequences of the results obtained. A
corollary of the main theorem applied to spacetimes with
constant curvature invariants (CSI) is a proof of the CSI-Kundt
conjecture in
4D \cite{CSI3D}. In future work we will study CSI spacetimes
in more detail \cite{CSI4}.

We then considered the inverse question: given a set of scalar
polynomial invariants, what can we say about the underlying
spacetime? In 4D we can partially characterize the Petrov type in
terms of scalar curvature invariants. In most circumstances we
only need some partial results or necessary conditions. For
example, we found that if $27J^2\neq I^3$, or if $27J^2= I^3$ but
the invariants $S_1\neq 0$ or $S_2\neq 0$, then the spacetime is
$\mathcal{I}$-non-degenerate. Some results were then presented in
the remaining cases  when both $I$ and $J$ are zero, and hence the
spacetime is VSI$^W_{0}$.

We also discussed whether a $\mathcal{I}$-non-degenerate metric is
unique under a discrete transformation. We introduced the notion
\emph{strong} and \emph{weak} non-degeneracy. We provided a
necessary criterion to determine spacetimes that are weakly
$\mathcal{I}$-non-degenerate but not strongly
$\mathcal{I}$-non-degenerate .

Having determined when a spacetime is completely characterized by its scalar
curvature invariants, it is also of interest to determine the minimal set of
such invariants needed for this classification. For example, in 4D there are
results
on determining the Riemann tensor in terms of zeroth order scalar
curvature invariants (and determining a minimal set of such
invariants) \cite{Carminati}.
It is also of interest to study when
a spacetime can be explicitly {\em constructed} from scalar curvature invariants.

%%%===========

This work is also of importance to the equivalence problem of
characterizing Lorentzian spacetimes (in terms of their Cartan scalars) \cite{kramer}.
Clearly, by knowing which spacetimes can be characterized
by their scalar curvature invariants alone, the computations
of the invariants (i.e., simple polynomial scalar invariants) is much more
straightforward and can be done algorithmically (i.e., the full
complexity of the equivalence method is not necessary). On the other
hand, the Cartan equivalence method also contains, at least in principle,
the conditions under which the classification is complete (although
in practice carrying out the classification for the more general
spacetimes is difficult, if not impossible). Therefore, in a sense,
the full machinery of the Cartan equivalence method is only necessary
for the classification of the degenerate Kundt spacetimes (which we shall
address in future work).

Let us briefly discuss this further in the context of
two simple examples, which also serve to illustrate the results
of the main theorem:

\begin{enumerate}

\item{}
The Schwarzschild vacuum type $D$ spacetime is an example of an
$\mathcal{I}$-non-degenerate spacetime. In the canonical
coordinate form of the  metric as given in \cite{Paiva}, the two scalar polynomial
invariants $C^2 \equiv C_{abcd}C^{abcd} = 48{m^2}{r^{-6}}$ and
$(\nabla C)^2 \equiv C_{abcd;e}C^{abcd;e} = 720(r-2m){m^2}{r^{-9}}$ are functionally
independent and can be used to solve for $r$ and $m$, and all of the
algebraically independent Cartan scalars
$\Psi_{2}$, $\nabla^2 \Psi_{20'}$, $\nabla^2 \Psi_{31'}$, and
$\nabla^2 \Psi_{42'}$  are consequently related to the polynomial
curvature invariants $C^2$ and $(\nabla C)^2$
\cite{Paiva}. In particular,
$\Psi_{2}= -{m}r^{-3}$, $\nabla^2 \Psi_{20'}= 12{m^2}r^{-6} -
6mr^{-5}$, so that
$48(\Psi_{2})^2 = C^2$   and
$120 (\Psi_{2}) (\nabla^2\Psi_{20'}) = -(\nabla C)^2$.  
We note that the second derivative Cartan scalars have the following boost weights:
$\nabla^2 \Psi_{20'}$ is +2, $\nabla^2 \Psi_{42'}$ is -2 and $\nabla^2 \Psi_{31'}$ is 0.

\item{}
A spatially homogeneous vacuum plane wave, which is a
special subcase of a Petrov type $N$ vacuum spacetime
admitting a covariantly constant null vector, belongs to the
class of vanishing scalar invariant (VSI) spacetimes \cite{4DVSI} and is consequently an example of a
degenerate Kundt spacetime. Since it is a VSI spacetime,
all scalar polynomial invariants are zero. However, distinct VSI
spacetimes give rise to a
distinct set of Cartan scalars \cite{kramer} (e.g., in flat space all
of the Cartan scalars
are zero). A spatially homogeneous vacuum plane wave
has two non-trivial Cartan scalars,
$\nabla \Psi_{00'}$ and $\nabla^2 \Psi_{00'}$.

\end{enumerate}

%%\newpage

\section{Discussion}

We have addressed the question of what is the class of Lorentzian
manifolds that can be completely characterized by the scalar
polynomial invariants constructed from the Riemann tensor and its
covariant derivatives. In particular, we proved the result that
this is true in the case of Petrov type I spacetimes in 4D. This
result was not previously known. However, some partial results for
4D Weyl (Petrov) type I spacetimes are known, which are
consistent with the above analysis. Let us review these results.

Essentially, in the case of Petrov (Weyl) type I, there exists a
unique frame so that all components of the Riemann tensor are
related to curvature invariants.  Indeed, in general there are
four different curvature invariants (e.g., corresponding to the
complex invariants $I$ and $J$), so that all invariants (which depend
on $4$ coordinates) are functionally dependent on these four
invariants. Problems arise in degenerate cases and cases with
symmetries. It is also known that all Petrov type I spacetimes are
completely backsolvable \cite{Carminati}.

Let us consider the Petrov type I case in more detail. From
\cite{HALL, HallBook} (also see Appendix D) it follows that if a
4D spacetime is of Petrov type I it can be classified according
to its rank and it is either:

\begin{enumerate}

\item curvature class A (and the holonomy group is general and of
type $R_{15}$),

\item curvature class C (and of holonomy type $R_{10}$ or
$R_{13}$, with restricted Segre type).

\end{enumerate}

Now, suppose the components of the Riemann tensor $R^a \;\!
_{bcd}$ are given in a coordinate domain $U$ with metric $g$. In
case (1), where the curvature class is of type A, for any other
metric $g'$ with the same components $R^a \;\! _{bcd}$ it follows
that $g'_{ab} = \alpha g_{ab}$ (where $\alpha$ is a constant);
i.e., the metric is determined up to a constant conformal factor
and the connection is uniquely determined.  This implies that all
higher order covariant derivatives of the Riemann tensor are
completely determined; i.e., given $R^a \;\! _{bcd}$, all of the
components of the covariant derivatives are determined and we only
need classify the Riemann tensor itself.  (Note that all of the
scalar polynomial curvature invariants are then determined, at
least up to an overall constant factor).

We can then pass to the frame formalism and determine the frame
components of the Riemann tensor (to do this we need the metric to
determine the orthogonality of the frame vectors and hence
construct the frame; since $g$ is specified up to an overall
constant conformal factor, orthogonality is unique).  The Petrov
type I case is completely backsolvable \cite{Carminati} and hence
the frame components are completely determined by the zeroth order
scalar invariants.  Therefore, it follows that the spacetime is
completely characterized by its scalar curvature invariants in
this case.

Let us now consider case (2), where the curvature
class is $C$.  Again, let us suppose that the $R^a \;\! _{bcd}$ are given
in $U$ with metric $g$.  If $g'$ is any other metric with the same
$R^a \;\! _{bcd}$, it follows that
$$  g'_{ab} = \alpha g_{ab} + \beta k_ak_b  $$
(where $\alpha$ and $\beta$ are constants).  The equation
\beq\label{star} R^a \;\! _{bcd} k^d =  0,  \eeq has a unique
non-trivial solution for $k \in T_m M$. Note that $R^a\;\! _{bcd}
k_a =0$ implies that $I_1 k_e =0$ and hence $I_1 = 0$, where $I_1$ is the Euler density:
$$I_1
\equiv [R^{abcd} R_{abcd} - 4R^{ab} R_{ab} + R^2].
$$

If $R^a \;\! _{bcd;e} k_a \neq 0$, then $\beta =0$ and the metric
is determined up to a constant conformal factor (and the holonomy
type is $R_{15}$).  This is similar to the first case discussed
above, but now some information on the covariant derivative of the
Riemann tensor is necessary (to ensure $R^a \;\! _{bcd;e} k_a \neq
0$).  Hence, first order curvature invariants are needed for the
classification of the spacetime. Since $R^a \;\! _{bcd;e} k_a =0$
implies that $I_2 k_e =0$, where $$I_2 \equiv [R^{abcd;e}
R_{abcd;e} -4R^{ab;c} R_{ab;c} +R^{,a} R_{,a}],$$ it follows that
the invariant $I_2 \neq 0$ implies that $R^a \;\! _{bcd;e} k_a
\neq 0$  in this case.

If $R^a \;\! _{bcd;e} k_a = 0$, then $R^a \;\! _{bcd} k_{a;e} =
0$, and since eqn. (\ref{star}) has a unique solution, $k_a$ is
recurrent. If $k_a$ is null, the spacetime is algebraically
special, and since we assume that the Petrov type is I, this is
not possible. Hence, $k_a$ is (a) timelike (TL) or (b) spacelike
(SL) and is, in fact, covariant constant (CC).

In case (2a), the spacetime admits a TL CC vector field $k_a$.
The holonomy is $R_{13}$, with a TL holonomy invariant subspace
which is non-degenerately reducible, and $M$ is consequently
locally $(1+3)$ decomposable (and static).  There exist local
coordinates (with $k = \frac{\partial}{\partial t}$) such that
the metric is given by
\begin{equation}
-dt^2 + g_{\alpha \beta}(x^\gamma) dx^\alpha dx^\beta \quad (\alpha
=1,2, 3)
\end{equation}
where $g_{\alpha \beta}$ is independent of $t$.  The metric is
unique up to an overall constant scaling and a time translation $t
\to \lambda t$, where $\lambda^2 = 1 + \beta/\alpha$ (reflecting
the non-uniqueness of the TL CC vector up to a constant scaling
$\lambda$).  All of the non-trivial components of the Riemann
tensor and its covariant derivatives are constructed from the 3D
positive definite metric $g_{\alpha \beta}$, and can be classified
by the corresponding 3D Riemann curvature invariants. In this
case (and case ($2b)$) there is an ignorable coordinate and all
invariants  are functions of $3$ independent functions; $R^a \;\!
_{bcd;e}$ must be used to uniquely fix the frame, and hence we
need information from the first order scalar invariants.

In case (2b), the spacetime admits a SL CC vector field $k_a$.
The holonomy is $R_{10}$, there exists a holonomy invariant $SL$
vector $k_a$ which is non-degenerately reducible, and $M$ is this
locally $(3+1)$ decomposable.  Choosing local coordinates in which
the SL CC vector $k = \frac{\partial}{\partial x}$, the metric is given by
\begin{equation}
dx^2 + {g}_{\overline{\alpha}\overline{\beta}}
dx^{\overline{\alpha}} d x^{\overline{\beta}} \quad
(\overline{\alpha} = 0, 2, 3)
\end{equation}
and $g_{\alpha \beta}$ is independent of $x$.  The metric is
unique up to an overall constant conformal factor and a space
translation $x \to \lambda x$ $(\lambda^2 = 1 + \beta/\alpha)$.
Classification now reduces to the classification of the class of
3D Lorentzian spacetimes with Lorentzian metric
${g}_{\overline{\alpha} \overline{\beta}}$ (the subclass such that
(2) is of Petrov type $I$).  We can now iterate the procedure for
3D Lorentzian spacetimes (such that (2) is Petrov type $I$).  In
the degenerate cases in which additional KV are admitted, we will
be led to the locally homogeneous case, and hence the 4D Petrov
type I locally homogeneous spacetimes (which are characterized by
their constant scalar invariants). Indeed, in 3D the Riemann
tensor is completely determined by the Ricci tensor.  There always
exists a frame in which the components of the Ricci tensor are
constants \cite{CSI} and so in this case the 4D spacetime is
Petrov type $I$ and $CH_0$ (curvature homogeneous \cite{Mp}), and
hence generically locally homogeneous.

\section*{Acknowledgments}
We would like to thank Robert Milson for useful comments and questions on our manuscript, and to Lode Wylleman for pointing out a mistake. This work was supported by the
Natural Sciences and Engineering Research Council of Canada. 
\appendix

\section{Notation}\label{notation}
Throughout we have used a Newman-Penrose (NP) tetrad given by
$e_{a}=\{\ell,n,m,\overline{m}\}$ with inner product

\begin{equation}
\eta_{ab}=\begin{bmatrix}
0 & 1 & 0 & 0 \\
1 & 0 & 0 & 0 \\
0 & 0 & 0 & -1 \\
0 & 0 & -1 & 0
\end{bmatrix}
\end{equation}

\noindent and directional derivatives defined by
\begin{eqnarray}
D=\ell^{\mu}\nabla_{\mu}, & \Delta=n^{\mu}\nabla_{\mu}, &
\delta=m^{\mu}\nabla_{\mu} \, .
\end{eqnarray}

Associated with an NP tetrad are the following definitions for the
connection coefficients that appear frequently above
\begin{eqnarray}
\kappa=m^{\mu}D\ell_{\mu}, & \sigma=m^{\mu}\delta\ell_{\mu}, &
\rho=m^{\mu}\overline{\delta}\ell_{\mu}
\end{eqnarray}

\noindent with the remaining ones being similarly defined.  Given
the frame components $R_{abcd}=R_{\alpha\beta\gamma\delta}e_{a}^{\
\alpha}e_{b}^{\ \beta}e_{c}^{\ \gamma}e_{d}^{\ \delta}$, we have
the definitions for the Weyl scalars
\begin{equation}
\begin{array}{ccccc}
\Psi_{0}=-C_{1313}, & \Psi_{1}=-C_{1213}, & \Psi_{2}=-C_{1342}, &
\Psi_{3}=-C_{1242}, & \Psi_{4}=-C_{2424} \nonumber
\end{array}
\end{equation}

\noindent and the Ricci scalars
\begin{equation}
\begin{array}{lll}
\Phi_{00}=\frac{1}{2}R_{11}, & \Phi_{01}=\frac{1}{2}R_{13}, & \Phi_{02}=\frac{1}{2}R_{33} \\
\Phi_{11}=\frac{1}{4}(R_{12}+R_{34}), &
\Phi_{12}=\frac{1}{2}R_{23}, & \Phi_{22}=\frac{1}{2}R_{22} \, .
\end{array}
\end{equation}

Given a covariant tensor $T$ with respect to an NP tetrad (or null
frame), the effect of a boost $\ell \mapsto e^{\lambda}\ell$, $n
\mapsto e^{-\lambda}n$ allows $T$ to be decomposed according to
its boost weight
\begin{equation}
T=\sum_b (T)_{b} \label{bwdecomp}
\end{equation}
where $(T)_{b}$ denotes the boost weight $b$ components of $T$. An
algebraic classification of tensors $T$ has been developed
\cite{class,Milson} which is based on the existence of certain
normal forms of (\ref{bwdecomp}) through   successive application
of null rotations and spin-boost.  In the special case where $T$
is the Weyl tensor in four dimensions, this classification reduces
to the well-known Petrov classification. However, the boost weight
decomposition can be used in the classification of any tensor $T$
in arbitrary dimensions.  As an application, a Riemann tensor of type $G$ has the following decomposition
\begin{equation}
R=(R)_{+2}+(R)_{+1}+(R)_{0}+(R)_{-1}+(R)_{-2}
\end{equation}
\noindent in every null frame.  A Riemann tensor is algebraically special if there exists a frame in which certain boost weight components can be transformed to zero, these are summarized in Table \ref{riemtypes}.

A useful discrete symmetry is the following (orientation-preserving) Lorentz transformation:
\beq
\ell\leftrightarrow n, \quad m\leftrightarrow \bar{m},
\label{dsym}\eeq
which interchanges the boost weights, $(T)_b\leftrightarrow (T)_{-b}$, and makes the replacements
\beq
(\kappa,\sigma,\rho,\tau,\epsilon,\beta)\leftrightarrow -(\nu,\lambda,\mu,\pi,\gamma,\alpha).
\eeq

\begin{table}[h]
\begin{center}
\small{
\begin{tabular}{c|c}
\hline
 & \\
{\bf Riemann type} & {\bf Conditions}  \\
 & \\
\hline
% & \\
G & --- \\
I & $(R)_{+2}=0$ \\
II & $(R)_{+2}=(R)_{+1}=0$ \\
III & $(R)_{+2}=(R)_{+1}=(R)_{0}=0$ \\
N & $(R)_{+2}=(R)_{+1}=(R)_{0}=(R)_{-1}=0$ \\
D & $(R)_{+2}=(R)_{+1}=(R)_{-1}=(R)_{-2}=0$ \\
O & all vanish (Minkowski space) \\
\hline
\end{tabular}
}
\caption{The relation between Riemann types and the vanishing of boost weight components.  For example, $(R)_{+2}$ corresponds to the frame components $R_{1313},R_{1414},R_{1314}$.}\label{riemtypes}
\end{center}
\end{table}

\section{Some special operators}

Consider the case where we have a tensor $S_{\mu\nu\alpha\beta}$, where
\[ S_{\mu\nu\alpha\beta}=S_{(\mu\nu)(\alpha\beta)}=S_{\alpha\beta\mu\nu},\]
This tensor can be considered as an operator:
\[ {\sf S}=(S^{\mu\nu}_{\phantom{\mu\nu}\alpha\beta}): ~ V\mapsto V, \]
where $V$ is the vector space of symmetric 2-tensors $M^{\mu\nu}$.

Therefore, we can consider the eigentensors of this map in the
standard manner. We can construct a set of projectors $\bot_{A}$
projecting onto each corresponding eigenspace. Assume that
$\bot_1$ is of rank 1 (as an operator). If $M^{\mu\nu}$ is the
corresponding (normalized) eigenvector, this means that
\[ (\bot_1)_{\mu\nu\alpha\beta}=M_{\mu\nu}M_{\alpha\beta}.\]
We can now consider the eigenvectors of ${\sf M}\equiv
M^{\mu}_{~\nu}$.  We are actually not considering the operator
${\sf M}$ itself, but rather ${\bot_1}$. However, $\bot_1$ can
also be considered as an operator:
\[ {\sf P}: N^{\mu\nu}\mapsto M^{\mu}_{~\alpha}M^{\nu}_{~\beta}N^{\alpha\beta}\]
Assume that $v^{\mu}$ and $w^{\nu}$ are eigenvectors of ${\sf M}$
with eigenvalues $\lambda_v$ and $\lambda_w$, respectively. Then,
if $N^{\mu\nu}=v^\mu w^\nu$,
\[ M^{\mu}_{~\alpha}M^{\nu}_{~\beta}N^{\alpha\beta}=\lambda_v\lambda_w N^{\mu\nu},\]
and is therefore an eigenvector of ${\sf P}$ with eigenvalue
$\lambda=\lambda_v\lambda_w$. Clearly, $v^{\mu}w^{\nu}$ has the
same eigenvalue as $w^{\mu}v^{\nu}$, so we will not be able to
distinguish these using projection operators. Furthermore, if
$\lambda_v=\pm \lambda_w$, then $v^{\mu}v^{\nu}$ has the same
eigenvalue as $w^{\mu}w^{\nu}$.

The above construction is useful in several cases. An example that recurs is the case where ${\sf M}$ has two one-dimensional eigenspaces spanned by $v^{\mu}$ and $w^{\mu}$, say. Assume also that $\lambda_v=-\lambda_w$. Then, ${\sf P}$ has two projection operators:
\beq
({\sf P}_1)_{\mu\nu\alpha\beta}&\propto & v_{\mu}w_{\nu}v_{\alpha}w_{\beta}+w_{\mu}v_{\nu}w_{\alpha}v_{\beta}, \\
({\sf P}_2)_{\mu\nu\alpha\beta}&\propto & v_{\mu}v_{\nu}v_{\alpha}v_{\beta}+w_{\mu}w_{\nu}w_{\alpha}w_{\beta},
\eeq
We  see that this is somewhat unfortunate because in spite of the fact that ${\sf M}$ sees the difference between the vectors $v^\mu$ and $w^\nu$, ${\sf P}$ does not. This is related to the fact that for some spacetimes there exists a discrete symmetry which interchanges two spacetimes with identical curvature invariants. Here this manifests itself in that we cannot actually determine which eigenvector correspond to which eigenvalue.

\section{Algebraically special $\nabla C$}\label{nabcinvar}

The relationship between the invariants of the Weyl tensor and the Petrov type
is well known; however, this is not the case for the covariant derivative of the
Weyl tensor.  A similar analysis for $\nabla C$ would require an algebraic
classification based on its boost weight decomposition, and a complete set of
its first order invariants.  We do not attempt to solve this general problem but
rather provide some relations relevant to our paper.  Restricting attention to
four dimensions we define the following tensors
\begin{eqnarray}
\stackrel{2}{T}_{abe}^{\quad\ fgh} & = & C_{abcd;e}C^{cdfg;h} \\
\stackrel{3}{T}_{abe\ \ ijk}^{\quad\ h} & = & \stackrel{2}{T}_{abe}^{\quad\ fgh}C_{fgij;k} \\
\stackrel{3,0}{T}_{abe\ \ ij}^{\quad\ h} & = & \stackrel{2}{T}_{abe}^{\quad\ fgh}C_{fgij} \\
\stackrel{4}{T}_{abe\ \ k}^{\quad\ h \ \ lmn} & = & \stackrel{3}{T}_{abe\ \ ijk}^{\quad\ h}C^{ijlm;n}
\end{eqnarray}
where the number above the tensor refers to the degree in $\nabla C$ or $C$.
All of these tensors are constructed purely from $\nabla C$ with the exception
of $\stackrel{3,0}{T}$ (which involves $C$).  Next, we consider the following first order invariants
\begin{eqnarray}
w_{2,1}=\stackrel{2}{T}_{abe}^{\quad\ abe} & w_{2,2}=\stackrel{2}{T}_{abe}^{\quad\ eab} & w_{2,3}=\stackrel{2}{T}_{a\ \ e\ \ h}^{\ \ e \ \ a \ \ h} \\
w_{3,1}=\stackrel{3,0}{T}_{abe}^{\quad\ eab} & w_{3,2}=\stackrel{3,0}{T}_{abe}^{\quad\ bea} & w_{3,3}=\stackrel{3,0}{T}_{a\ \ e\ \ b}^{\ \ e \ \ b \ \ a} \\
w_{4,1}=\stackrel{4}{T}_{abe\ \ k}^{\quad\ e \ \ abk} & w_{4,2}=\stackrel{4}{T}_{abe\quad\ \ h}^{\quad\ heab} & w_{4,3}=\stackrel{4}{T}_{abe\ \ h}^{\quad\ h \ \ abe} \\
w_{4,4}=\stackrel{4}{T}_{abe\ \ h}^{\quad\ h \ \ eba} & w_{4,5}=\stackrel{4}{T}_{abe\quad\quad m}^{\quad\ abem} & w_{4,6}=\stackrel{4}{T}_{abe\ \ m}^{\quad\ a \ \ ebm} \\
w_{4,7}=\stackrel{4}{T}_{abe\quad\quad m}^{\quad\ maeb}
\end{eqnarray}
in which $w_{n,i}$ denotes the $ith$ invariant of degree $n$ in $\nabla C$ or
$C$.  Since the aligned frames of $\nabla C$ and $C$ need not be the same, the
$w_{3,i}$ are mixed invariants and the remaining invariants are pure $\nabla C$ invariants.
For $\nabla C$ and $C$ algebraically general (type $G$) we obtain the following syzygies
\begin{eqnarray*}
w_{2,1}+2w_{2,2}-2w_{2,3}=0, & w_{3,1}+2w_{3,2}+2w_{3,3}=0, & w_{4,1}-w_{4,3}=0
\end{eqnarray*}
which are  the result of identities, symmetries and dimensionally dependent relations\footnote{Thanks to Jose M. Martin-Garcia for pointing this out to us.} \cite{M-G}.  In subsequent
calculations we always impose these syzygies so that our set reduces to ten
invariants.  Now consider $\nabla C$ of algebraically special type, which is
obtained by setting the minimal number of appropriate boost weight components to
vanish.  We obtain the following results:
\begin{enumerate}

\item If $\nabla C$  is type $II$ or $D$ (i.e., boost weight $+3,+2,+1$

components vanish) then the syzygies  $S_{1}=0$ and $S_{2}=0$ hold.

\item If $\nabla C$ is type $G$ or   type $H$ (i.e., boost weight $+3$ vanish),
 or type $I$ (i.e., boost weight $+3,+2$ vanish) then, in general, $S_{1}\neq 0$ and $S_{2}\neq 0$.
\end{enumerate}
The second statement refers to the most general types of $G$, $H$ or $I$
where no further algebraically special subcases are taken into account.
Below are the expressions for $S_{1}$ and $S_{2}$.  Note that $S_{1}$ is
linear in $w_{4,5}$, and when $S_1 =0$, we use this syzygy in the
derivation\footnote{$w_{4,5}$ does not appear in $S_{2}$.} of $S_{2}$; hence
these two invariant expressions are generally independent.  In type $II$ or $D$ we can regard $S_{1}=0$ as expressing the dependency of $w_{4,5}$ in terms of the other invariants of $S_{1}$. In $S_{2}$ each of the $w_{2,i}$ appear quadratically whereas each of $w_{4,i}$ appear quartically therefore one of these invariants is dependent with respect to the other invariants in $S_{2}$.  Since these syzygies are of degree 8 and 16, and the invariants considered here are of maximum degree 4, one would expect $S_1$ and $S_2$ to attain a simpler form if expressed in terms of higher degree invariants.  These calculations were performed with the aid of \textit{GRTensorII} \cite{grtensor}.
\begin{eqnarray*}
\lefteqn{S_{1}=-14464 w_{2,3} w_{2,2} w_{4,4}-992 w_{2,3} w_{2,2} w_{4,1}+15872 w_{2,3} w_{2,2} w_{4,6}} \\
& &  +7424 w_{2,3} w_{2,2}  w_{4,7}-1600 w_{2,3} w_{2,2} w_{4,2}+1216 w_{2,2}^2 w_{4,2} \\
& & +640 w_{2,3}^2 w_{4,2}+(21504 w_{4,4}+2304 w_{2,2}^2-24576 w_{4,6}-4608 w_{2,3} w_{2,2} \\
& & -6144 w_{4,7}+2304 w_{2,3}^2-768 w_{4,1}+768 w_{4,2}) w_{4,5}-6656 w_{2,2}^2 w_{4,7} \\
& & -154112 w_{4,6} w_{4,4}-8960 w_{2,3}^2 w_{4,6}-464 w_{4,7} w_{4,1}-2272 w_{4,7} w_{4,2} \\
& & -15104 w_{2,2}^2 w_{4,6}+224 w_{2,2}^2 w_{4,1}+7760 w_{4,4} w_{4,2}-2432 w_{4,6} w_{4,1} \\
& & -10240 w_{4,6} w_{4,2}+15232 w_{2,2}^2 w_{4,4}+2680 w_{4,4} w_{4,1}+56320 w_{4,7} w_{4,6} \\
& & +512 w_{2,3}^2 w_{4,1}+6400 w_{2,3}^2 w_{4,4}-41408 w_{4,7} w_{4,4}-36 w_{4,2} w_{4,1} \\
& & -2816 w_{2,3}^2 w_{4,7}+180 w_{4,2}^2+6464 w_{4,7}^2+58832 w_{4,4}^2+94208 w_{4,6}^2-171 w_{4,1}^2
\end{eqnarray*}
\newpage
\begin{eqnarray*}
\lefteqn{S_{2}=-5364449280w_{4,6}^2w_{2,3}w_{2,2}w_{4,2}+2223360w_{2,3}^2w_{4,2}^3} \\ & & -3893760w_{4,2}^3w_{2,2}^2-603625881600w_{4,6}^3w_{4,4}-20101201920w_{4,6}^3w_{2,3}^2+55490641920w_{4,6}^3w_{2,2}^2 \\ & & -22171852800w_{4,7}^3w_{4,4}-4885920w_{4,2}^3w_{4,1}+148414464000w_{4,6}^2w_{4,7}^2+568104468480w_{4,6}^2w_{4,4}^2 \\ & &
-44177817600w_{4,6}^3w_{4,2}-28282060800w_{4,6}^3w_{4,1}+3288600w_{4,2}w_{4,1}^3-81312860160w_{4,7}w_{4,4}^3 \\ & & +68296366080w_{4,7}^2w_{4,4}^2+2340126720w_{2,2}^2w_{4,7}^3+3975480w_{4,2}^2w_{4,1}^2+305528832000w_{4,6}^3w_{4,7} \\ & & +187499520w_{4,6}^2w_{4,1}^2+3116666880w_{4,6}^2w_{4,2}^2-227302871040w_{4,6}w_{4,4}^3+12399045120w_{4,4}^3w_{4,2} \\ & & -80592261120w_{4,4}^3w_{2,2}^2-194522400w_{4,4}^2w_{4,1}^2+2817964800w_{4,4}^3w_{4,1}+2603059200w_{4,4}^3w_{2,3}^2 \\ & & -16178400w_{4,1}^3w_{4,7}+22063680w_{4,1}^3w_{2,2}^2+62340480w_{4,1}^2w_{4,7}^2+41428800w_{4,1}^3w_{4,6} \\ & & -10962000w_{4,1}^3w_{4,4}-1406880w_{4,1}^3w_{2,3}^2+1134028800w_{4,7}^3w_{4,1}-1420185600w_{4,7}^3w_{4,2} \\ & & +31371264000w_{4,7}^3w_{4,6}-612679680w_{4,7}^3w_{2,3}^2-30101760w_{4,2}^3w_{4,7}+71660160w_{4,2}^3w_{4,4} \\ & &  +310187520w_{4,2}^2w_{4,7}^2+1551254400w_{4,2}^2w_{4,4}^2-96145920w_{4,2}^3w_{4,6}+1095120w_{4,2}^4 \\ & & +231211008000w_{4,6}^4+32785562880w_{4,4}^4+496125w_{4,1}^4+2437632000w_{4,7}^4
\\ & & -101231493120w_{4,6}^2w_{2,3}w_{2,2}w_{4,4}+41724149760w_{4,6}^2w_{2,3}w_{2,2}w_{4,7}-10441359360w_{4,6}^2w_{2,3}w_{2,2}w_{4,1} \\ & & +8616960w_{2,3}w_{2,2}w_{4,2}^2w_{4,7}+109117440w_{2,3}w_{2,2}w_{4,2}^2w_{4,6}+61908480w_{2,3}w_{2,2}w_{4,2}^2w_{4,4} \\ & & -266860800w_{2,3}w_{2,2}w_{4,1}^2w_{4,7}+1940244480w_{2,3}w_{2,2}w_{4,1}w_{4,7}^2+2091409920w_{2,3}w_{2,2}w_{4,1}w_{4,4}^2 \\ & &
+414351360w_{2,3}w_{2,2}w_{4,1}^2w_{4,6}+52652160w_{2,3}w_{2,2}w_{4,1}w_{4,2}^2+106024320w_{2,3}w_{2,2}w_{4,1}^2w_{4,4} \\ & &
+50016960w_{2,3}w_{2,2}w_{4,1}^2w_{4,2}+643184640w_{4,2}w_{4,1}w_{2,2}^2w_{4,7}+201784320w_{4,2}w_{4,1}w_{4,6}w_{4,4} \\ & &
-150036480w_{4,2}w_{4,1}w_{2,3}^2w_{4,6}+1132830720w_{4,2}w_{4,1}w_{2,2}^2w_{4,6}-1692518400w_{4,2}w_{4,1}w_{4,7}w_{4,6} \\ & &
-1386869760w_{4,2}w_{4,1}w_{2,2}^2w_{4,4}-74626560w_{4,2}w_{4,1}w_{2,3}^2w_{4,4}+1610933760w_{4,2}w_{4,1}w_{4,7}w_{4,4} \\ & &
+195978240w_{4,2}w_{4,1}w_{2,3}^2w_{4,7}-4386816000w_{2,3}^2w_{4,2}w_{4,6}w_{4,4}+2208890880w_{2,3}^2w_{4,2}w_{4,7}w_{4,6} \\ & &
-1048596480w_{2,3}^2w_{4,2}w_{4,7}w_{4,4}+26542080000w_{4,6}w_{4,4}^2w_{2,3}w_{2,2}-141363118080w_{4,6}w_{4,4}w_{2,2}^2w_{4,7} \\ & &
+3734138880w_{4,6}w_{4,4}w_{4,7}w_{4,1}+60439633920w_{4,6}w_{4,4}w_{4,7}w_{4,2}+19318947840w_{4,6}w_{4,4}w_{2,2}^2w_{4,1} \\ & &
+18244730880w_{4,6}w_{4,4}w_{2,2}^2w_{4,2}-2918891520w_{4,6}w_{4,4}w_{2,3}^2w_{4,1}+28718530560w_{4,6}w_{4,4}w_{2,3}^2w_{4,7} \\ & &
-14858588160w_{4,7}w_{4,4}^2w_{2,3}w_{2,2}+3924910080w_{4,7}^2w_{4,4}w_{2,3}w_{2,2}+9899274240w_{4,7}w_{4,4}w_{2,2}^2w_{4,1} \\ & &
+6060810240w_{4,7}w_{4,4}w_{2,2}^2w_{4,2}+304588800w_{4,7}w_{4,4}w_{2,3}^2w_{4,1}+1175224320w_{2,3}^2w_{4,6}w_{4,7}w_{4,1} \\ & &
-7813693440w_{2,2}^2w_{4,7}w_{4,6}w_{4,1}-5205196800w_{2,2}^2w_{4,7}w_{4,6}w_{4,2}+2413071360w_{4,4}^2w_{2,3}w_{2,2}w_{4,2} \\ & &
+13086720w_{2,3}w_{2,2}w_{4,7}^2w_{4,2}+5733089280w_{2,3}w_{2,2}w_{4,7}^2w_{4,6}-1048320w_{2,3}w_{2,2}w_{4,2}^3 \\ & &
-312531840w_{4,2}w_{4,1}w_{4,4}^2-25586565120w_{4,6}^2w_{2,3}^2w_{4,7}-21538897920w_{4,6}w_{4,4}^2w_{2,3}^2 \\ & &
-204650496000w_{4,6}w_{4,4}w_{4,7}^2-17712000w_{4,2}^2w_{4,1}w_{2,3}^2+78222827520w_{4,7}w_{4,4}^2w_{2,2}^2 \\ & &
-84564000w_{4,2}w_{4,1}^2w_{4,4}+824785920w_{2,3}^2w_{4,2}w_{4,4}^2+426528000w_{4,6}w_{4,4}w_{4,1}^2 \\ & & -7596933120w_{4,6}^2w_{2,2}^2w_{4,2}
-202144481280w_{4,6}^2w_{2,2}^2w_{4,4}+152928000w_{4,2}^2w_{4,1}w_{4,6} \\ & & +90594754560w_{4,6}^2w_{4,4}w_{4,2}
-158883840w_{2,3}^2w_{4,2}^2w_{4,6}-4459392000w_{4,6}w_{4,4}w_{4,2}^2 \\ & & +2974187520w_{4,6}^2w_{2,3}^2w_{4,1}-33359040w_{4,2}w_{4,1}^2w_{4,7}
-46275840w_{4,2}^2w_{4,1}w_{2,2}^2 \\ & & +91186560w_{4,2}^2w_{4,1}w_{4,7}+277770240w_{2,3}^2w_{4,2}w_{4,7}^2+59454259200w_{4,6}^2w_{2,2}^2w_{4,7} \\ & &
+202020480w_{4,2}w_{4,1}^2w_{4,6}+38804520960w_{4,6}^2w_{2,3}^2w_{4,4}-560862720w_{4,2}w_{4,1}w_{4,7}^2 \\ & & +78681600w_{2,3}^2w_{4,2}^2w_{4,4}-4388981760w_{4,7}^2w_{4,4}w_{4,1}-20584074240w_{4,7}w_{4,4}^2w_{4,2}
\end{eqnarray*}
\begin{eqnarray*}
 & & -23960862720w_{4,7}^2w_{4,4}w_{2,2}^2
+396922429440w_{4,6}w_{4,4}^2w_{4,7}-43977600w_{4,2}w_{4,1}^2w_{2,2}^2 \\ & & +456929280w_{4,6}^2w_{4,2}w_{4,1}+664450560w_{4,7}w_{4,4}^2w_{4,1} +3913482240w_{4,6}^2w_{2,3}^2w_{4,2} \\ & & +495043200w_{4,7}w_{4,4}w_{4,1}^2-20418462720w_{4,6}w_{4,4}^2w_{4,1}
-614758809600w_{4,6}^2w_{4,7}w_{4,4} \\ & & -143570880w_{4,2}^2w_{4,1}w_{4,4}-1454814720w_{4,7}w_{4,4}w_{4,2}^2 -5780275200w_{4,6}^2w_{2,2}^2w_{4,1} \\ & & -12847680w_{4,2}w_{4,1}^2w_{2,3}^2+44401582080w_{4,6}^2w_{4,4}w_{4,1}
+9839646720w_{4,7}^2w_{4,4}w_{4,2} \\ & & -59617105920w_{4,6}w_{4,4}^2w_{4,2}-7641907200w_{4,6}^2w_{4,7}w_{4,1}+205920w_{2,3}w_{2,2}w_{4,1}^3 \\ & & -42624000w_{2,3}^2w_{4,2}^2w_{4,7}-43020288000w_{4,6}^2w_{4,7}w_{4,2}+226496839680w_{4,6}w_{4,4}^2w_{2,2}^2 \\ & & -5489233920w_{4,7}w_{4,4}^2w_{2,3}^2+3500236800w_{4,7}^2w_{4,4}w_{2,3}^2-7591034880w_{2,3}^2w_{4,6}w_{4,7}^2 \\ & &
-91699200w_{2,3}^2w_{4,6}w_{4,1}^2+1986324480w_{4,2}^2w_{4,7}w_{4,6}-377994240w_{4,2}^2w_{2,2}^2w_{4,4} \\ & & +10468362240w_{4,4}^3w_{2,3}w_{2,2}-10165155840w_{4,4}^2w_{2,2}^2w_{4,2}-856350720w_{2,2}^2w_{4,7}^2w_{4,2} \\ & & +20755906560w_{2,2}^2w_{4,7}^2w_{4,6}+101836800w_{2,2}^2w_{4,7}w_{4,2}^2+237242880w_{2,2}^2w_{4,7}w_{4,1}^2  \\ & & +321085440w_{2,2}^2w_{4,6}w_{4,2}^2-167454720w_{2,2}^2w_{4,6}w_{4,1}^2-2234142720w_{2,2}^2w_{4,7}^2w_{4,1} \\ & &
-152616960w_{2,3}w_{2,2}w_{4,7}^3-228591360w_{2,2}^2w_{4,4}w_{4,1}^2-12494730240w_{2,2}^2w_{4,4}^2w_{4,1} \\ & &
-13230720w_{2,3}^2w_{4,4}w_{4,1}^2+260743680w_{2,3}^2w_{4,4}^2w_{4,1}+68417280w_{2,3}^2w_{4,7}w_{4,1}^2 \\ & & +65399685120w_{4,6}^3w_{2,3}w_{2,2}
+6782976000w_{2,3}w_{2,2}w_{4,1}w_{4,6}w_{4,4}-534067200w_{2,3}^2w_{4,7}^2w_{4,1} \\ & & -1123038720w_{4,1}^2w_{4,7}w_{4,6} +4393267200w_{4,7}^2w_{4,6}w_{4,1}-13674700800w_{4,7}^2w_{4,6}w_{4,2} \\ & & -644705280w_{2,3}w_{2,2}w_{4,1}w_{4,7}w_{4,2} +780894720w_{2,3}w_{2,2}w_{4,1}w_{4,4}w_{4,2} \\ & & -204963840w_{2,3}w_{2,2}w_{4,1}w_{4,6}w_{4,2} +14745600w_{2,3}w_{2,2}w_{4,1}w_{4,7}w_{4,6} \\ & & -4408104960w_{2,3}w_{2,2}w_{4,1}w_{4,7}w_{4,4} +1975910400w_{4,6}w_{4,4}w_{2,3}w_{2,2}w_{4,2} \\ & & -19641139200w_{4,6}w_{4,4}w_{2,3}w_{2,2}w_{4,7} -1053757440w_{4,7}w_{4,4}w_{2,3}w_{2,2}w_{4,2} \\ & & -1527644160w_{2,3}w_{2,2}w_{4,2}w_{4,7}w_{4,6}
\end{eqnarray*}

\section{Curvature}
Let $M$ be a 4-dimensional smooth connected Hausdorff manifold
admitting a global smooth Lorentz
metric $h$ with
associated curvature tensor $R$. It will be convenient to
describe a simple algebraic classification of $R$  according to its rank
(relative to
$h$). This classification is easily described geometrically and is
a pointwise classification \cite{HallBook}.

A skew-symmetric tensor $F$ of type $(0,2)$ or $(2,0)$ at $m\in T_mM$ is
called a {\em bivector}. If $F(\neq 0)$ is such a bivector, the
rank of any of its (component) matrices is either two or four. In
the former case, one may write (e.g. in the $(2,0)$ case)
$F^{ab}=2r^{[a}s^{b]}$ for $r, s\in T_mM$ (or alternatively,
$F=r\wedge s$) and $F$ is called {\em simple}, with the
2-dimensional subspace (2-space) of $T_mM$ spanned by $r, s$
referred to as the {\em blade} of $F$. In the latter case, $F$ is
called {\em non-simple}.

The metric $h(m)$ converts $T_mM$ into a Lorentz inner product
space and thus it makes sense to refer to vectors in $T_mM$ and
covectors in the cotangent space $T_m^{*}M$ to $M$ at $m$ (using
$h(m)$ to give a unique isomorphism $T_mM\leftrightarrow
T_m^{*}M$, that is, to raise and lower tensor indices) as being
{\em timelike}, {\em spacelike}, {\em null} or {\em orthogonal},
using the signature $(-,+,+,+)$. The same applies to 1-dimensional
subspaces ({\em directions}) and 2- and 3-dimensional subspaces of
$T_mM$ or $T^{*}_mM$. A {\em simple} bivector at $m$ is then
called {\em timelike} (respectively, {\em spacelike} or {\em
null}) if its blade at $m$ is a timelike (respectively a
spacelike or null) 2-space at $m$. A {\em non-simple} bivector
$F$ at $m$ may be shown to uniquely determine an orthogonal
pair of 2-spaces at $m$, one spacelike and one timelike, and
which are referred to as the {\em canonical pair of blades} of
$F$. A tetrad $(l,n,x,y)$ of members of $T_mM$ is called a null
tetrad at $m$ if the only non-vanishing inner products between
its members at $m$ are $h(l,n)=h(x,x)=h(y,y)=1$. Thus $l$ and $n$
are null.

\subsection{Classification}

Define a linear map $f$ from the 6-dimensional vector space of
type $(2,0)$ bivectors at $m$ into the vector space of type
$(1,1)$ tensors at $m$ by $f:F^{ab}\rightarrow R^a_{\
bcd}F^{cd}$. The condition (2) shows that if a tensor $T$ is in
the range of $f$ then
\begin{equation}
h_{ae}T^e_{\ b}+h_{be}T^e_{\ a}=0 \ \ \ \ (\Rightarrow
T_{ab}=-T_{ba}, \ \  T_{ab}=h_{ae}T^e_{\ b})
\end{equation}
and so $T$ can be regarded as a member of the matrix
representation of the Lie algebra of the pseudo-orthogonal
(Lorentz) group of $h(m)$. Using $f$ one can divide the curvature
tensor $R(m)$ into five classes.

\begin{description}
\item[Class $A$] This is the most general curvature class and the curvature
will be said to be of (curvature) class $A$ at $m\in M$ if it is
not in any of the classes $B$, $C$, $D$ or $O$ below.
\item[Class $B$] The curvature tensor is said to be of (curvature)
class $B$ at $m\in M$ if the range of
$f$ is 2-dimensional and consists of all linear combinations of
type $(1,1)$ tensors $F$ and $G$ where $F^a_{\ b}=x^ay_b-y^ax_b$
and $G^a_{\ b}=l^an_b-n^al_b$ with $l,n,x,y$ a null tetrad at
$m$. The curvature tensor at $m$ can then be written as
\begin{equation}
  R_{abcd}\equiv h_{ae}R^e_{\ bcd}=\frac{\alpha}{2} F_{ab}F_{cd}-\frac{\beta}{2} G_{ab}G_{cd}
\end{equation}
where $\alpha,\beta\in\mathbb{R}$, $\alpha\neq 0\neq\beta$.
\item[Class $C$] The curvature tensor is said to be of (curvature)
class $C$ at $m\in M$ if the range of $f$ is 2- or 3-dimensional
and if there exists $0\neq k\in T_mM$ such that each of the type
$(1,1)$ tensors in the range of $f$ contains $k$ in its kernel
(i.e. each of their matrix representations $F$ satisfies $F^a_{\
b}k^b=0$).
\item[Class $D$] The curvature tensor is said to be of
(curvature) class $D$ at $m\in M$ if the range of $f$ is
1-dimensional. It follows that the curvature components satisfy
$R_{abcd}=\lambda F_{ab}F_{cd}$ at $m$ $(0\neq\lambda\in
\mathbb{R})$ for some bivector $F$ at $m$ which then satisfies
$F_{a[b}F_{cd]}=0$ and is thus simple.
\item[Class $O$] The
curvature tensor is said to be of (curvature) class $O$ at $m\in
M$ if it vanishes at $m$.
\end{description}
The following results are useful \cite{HallBook}:
\begin{enumerate}
\item For the classes $A$ and $B$ there does {\bf not} exist
$0\neq k\in T_mM$ such that $F^a_{\ b}k^b=0$ for {\em every} $F$
in the range of $f$.
\item For class $A$, the range of $f$ has
dimension at least two and if this dimension is four or more the
class is necessarily $A$.
\item The vector $k$ in the definition of class $C$ is
unique up to a scaling.
\item For the classes $A$ and $B$ there
does {\bf not} exist $0\neq k\in T_mM$ such that $R^a_{\
bcd}k^d=0$, whereas this equation has exactly one independent
solution for class $C$ and two for class $D$.
\item The five classes $A$, $B$, $C$, $D$ and $O$ are mutually
exclusive and exhaustive for the curvature tensor at $m$. If the
curvature class is the same at each $m\in M$ then $M$ will be
said to be of that class.
\end{enumerate}

\subsection{Properties}

Suppose that the components of the Riemann tensor $R^a_{bcd}$ are
given in a coordinate domain $U$ with metric $h$. Suppose that
$h'$ is another metric with the same components $R^a_{bcd}$. It
follows from \cite{HALL} that:

\begin{flushright}
$ \begin{array}{llcr}
\mathrm{Class\ A}\hspace{1cm} & h'_{ab}=\alpha h_{ab} & \hspace{1cm} & (42a)\hspace{-2mm} \\
\mathrm{Class\ B}\hspace{1cm} & h'_{ab}=\alpha h_{ab}+2\beta
l_{(a}n_{b)}=(\alpha+\beta)h_{ab}-\beta(x_ax_b+y_ay_b) & \hspace{1cm}& (42b)\hspace{-2mm} \\
\mathrm{Class\ C}\hspace{1cm} & h'_{ab}=\alpha h_{ab}+\beta k_ak_b
& \hspace{1cm}&
(42c)\hspace{-2mm} \\
\mathrm{Class\ D}\hspace{1cm} & h'_{ab}=\alpha h_{ab}+\beta
r_ar_b+\gamma
s_as_b+2\delta r_{(a}s_{b)} & \hspace{1cm}& (42d)\hspace{-2mm} \\
\end{array}$
\end{flushright} \addtocounter{equation}{1} where $\alpha, \beta,
\gamma, \delta\in\mathbb{R}$.

Note that $h_{ab;c}=h_{ab}w_c$ for some smooth 1-form $w$ on the
open subset $A$. Using condition (2) above and the Ricci identity we get
$h_{ab;[cd]}=0$, which implies  $h_{ab}w_{[c;d]}=0$; thus
$w_{[c;d]}=0$ and so $w_a$ is locally a gradient. Hence, for each
$m\in A$, there is an open neighborhood $W$ of $m$ on which
$w_a=w_{,c}$ for some smooth function $w$. Then on $W$,
$g_{ab}=e^{-w}h_{ab}$ satisfies $g_{ab;c}=0$. Further, if $g'$ is
any other local metric defined on some neighborhood $W'$ of $m$
and compatible with $\Gamma$ then $g'$ satisfies condition (2) on $W'$ and
hence, on $W\cap W'$, $g'=\phi g$ for some positive smooth
function $\phi$. From this and the result $g'_{ab;c}=0$ it follows
that $g'$ is a constant multiple of $g$ on $W\cap W'$.

\end{document}